\DeclareMathOperator*{\argmin}{argmin}
\DeclareMathOperator{\trace}{Tr}
\newtheorem{conjecture}{Conjecture}
\newtheorem{fact}{Fact}
\newtheorem{lemma}{Lemma}
\newtheorem{theorem}{Theorem}
\newtheorem{problem}{Problem}
\theoremstyle{definition}
\newtheorem{definition}{Definition}
\newcommand{\sizeof}[1]{\left\lvert{#1}\right\rvert}
\newcommand{\determinant}[1]{\left\lvert{#1}\right\rvert}
\newcommand{\Tr}[1]{\trace\left({#1}\right)}
\def\FLCuSum{\texttt{FL-CUSUM}}
\title{A Quantum Speedup in Localizing Transmission Loss Change \\
in Optical Networks}
\author[1]{Yufei Zheng\thanks{\texttt{\{yufeizheng, yuzhenchen, towsley\}@cs.umass.edu}}}
\author[1]{Yu-Zhen Janice Chen}
\author[2]{Prithwish Basu\thanks{\texttt{prithwish.basu@rtx.com}}}
\author[1]{Don Towsley}
\affil[1]{\textit{College of Information and Computer Sciences, University of Massachusetts Amherst}}
\affil[2]{\textit{RTX BBN Technologies}}
\date{}  
\begin{document}
\maketitle

\begin{abstract}
The ability to localize transmission loss change to a subset of links in optical networks is crucial for maintaining network reliability, performance and security.
\emph{Quantum probes}, implemented by sending blocks of $n$ coherent-state pulses augmented with continuous-variable (CV) squeezing ($n=1$) or weak temporal-mode entanglement ($n>1$) over a lossy channel to a receiver with homodyne detection capabilities, are known to be more sensitive than their quasi-classical counterparts in detecting a sudden increase in channel loss.
The enhanced sensitivity can be characterized by the increased Kullback-Leibler (KL) divergence of the homodyne output, before and after the loss change occurs.
When combined with the theory of quickest change detection (QCD), the increase in KL divergence translates into a decrease in detection latency.

In this work, we first revisit quantum probes over a channel, and consider the ratio of quantum to classical KL divergences as a metric for quantum speedup.
We generalize previous results on $n=1$ (CV squeezed states) to arbitrary values of $n$, and discuss the monotonicity and asymptotics of this ratio with respect to the parameters of the quantum probe.
Assuming a subset of nodes in an optical network is capable of sending and receiving such probes through intermediate nodes with all-optical switching capabilities, we present a scheme for quickly detecting the links that have suffered a sudden drop in transmissivity. 
Since quantum probes lose their sensitivity with increasing loss in the channel, we first propose a probe construction algorithm that makes the set of links suffering transmission loss change identifiable, while minimizing the longest distance a probe traverses.
We then introduce new cumulative sum (CUSUM) statistics with a stopping rule, which allows us to run the CUSUM algorithm to quickly localize the faulty links using our constructed probes.
Finally, we show that the proposed scheme achieves a quantum speedup in decreasing the localization delay.
\end{abstract}

\section{Introduction}

\emph{If we augment part of a classical network with quantum capabilities, what can we do better than in the purely classical setting?}
In this paper, we show that detecting transmission loss change in an optical network provides one answer to this question.

Optical networks are highly prevalent in modern communication systems.
Major cloud providers rely on optical networks to connect data centers and ensure reliable access to cloud services~\cite{korsbäck2023growing,poutievski2022jupiter,velush2023boosting}. Additionally, optical mesh networks, e.g., Reconfigurable Optical Add-Drop Multiplexer (ROADM) networks have seen significant density of deployment by metropolitan area Internet Service Providers and telecom providers.
However, unattended optical fiber networks are often subjected to physical damages to fibers, or unauthorized fiber tapping~\cite{medard1997security,skorin2016physical,Karlsson:23}.
These lead to increases in transmission loss, which often translates into increased bit error rates, which may in turn impact the quality of service.
Perhaps more strikingly, by introducing only a small transmission loss, an eavesdropper can extract information on the data being transferred~\cite{medard1997security,Karlsson:23}.
Therefore, the ability to detect transmission loss in an optical network is crucial for maintaining network reliability, performance and security.

Classical and quasi-classical methods such as bit error rate measurement, optical spectrum analysis (OSA), and mean optical power monitoring exist for detecting changes in transmission loss in an optical \emph{channel}, but are generally not fast enough to detect a change in transmission loss that lasts for only seconds~\cite{skorin2016physical,gong2020secure}.
Optical time-domain reflectometry (OTDR) is used for fault detection and localization, but it is not sensitive enough for the order of loss incurred by a stealthy eavesdropper~\cite{chan2010optical,shim2012correlation,gong2020secure}.
The Quantum Alarm technique~\cite{gong2020secure} achieves better empirical sensitivity, by using the parameter estimation step of continuous-variable quantum key distribution (CV-QKD) protocols to act as an alarm for loss detection.
The quasi-classical benchmark in~\cite{guha2025quantum} is a better protocol due to the data processing inequality---while the parameters used in Quantum Alarm~\cite{gong2020secure} are the output of homodyne detection after post-processing---the quasi-classical benchmark in~\cite{guha2025quantum} works directly with the probabilistic distributions from the homodyne output, using the theory of \emph{quickest change detection (QCD)}.

On the quantum side, \cite{guha2025quantum} recently shows that, a vanishingly small quantum energy in the form of \emph{squeezed states}, when added to the bright classical pulses, dramatically increases the sensitivity in detecting transmission loss change in an optical channel. 
The quantum protocols in~\cite{guha2025quantum} are simple to implement and effective, with the simplest protocol involving sending displaced squeezed states as probes from a squeezing augmented transmitter over a lossy channel, and processing the output at a homodyne detection receiver.
In the advanced protocol, coherent states augmented by a continuous-variable entanglement produced by a temporal-mode Green Machine~\cite{cui2023superadditive} are sent as probes, and the outputs are similarly captured by a homodyne detector.
Feeding the homodyne outputs into a \emph{cumulative sum (CUSUM) algorithm}~\cite{page1954continuous}, one can quickly catch the change, with a detection delay that is inversely proportional to the \emph{Kullback-Leibler (KL) divergence} from the pre-change to the post-change distributions. 
Compared to the quasi-classical benchmark, the quantum probes induce a larger KL divergence, which leads to a lower detection delay.

In this paper, we first revisit quantum probes over a channel, generalize and extend existing theoretical results on squeezing-augmented probes to the general entanglement-augmented probes.
Then, building on the quantum probes for a channel~\cite{guha2025quantum}, we propose a method for detecting and localizing transmission loss change in a \emph{network}.
Though the technologies for sending and receiving these probes are readily available, it is only realistic to assume that a subset of the network nodes are equipped with such quantum capabilities.
We refer to the quantum-capable nodes as \emph{monitors}.
We assume the rest of the nodes, the non-monitors, are passive optical switches that can pass the quantum probes through.

When expanding the goal from change detection in a channel to fault localization in a network, we must decide how to route the probes given a network topology and a set of monitors.
If we think of whether a link suffers a sudden transmission drop as a Boolean state of the link, the framework of \emph{Boolean network tomography}~\cite{duffield2003simple} can help infer every link's state with limited visibility through probes between monitors, and therefore localize the transmission drop.
While the probes do not output Boolean values, their ability to \emph{distinguish} whether a change occurs on the links they traverse is what makes the Boolean framework relevant.
Previous works have studied where to place monitors in a network~\cite{ma2015optimal}, conditions that allow network failures to be identified~\cite{ma2014node}, and how to construct probes to localize faults~\cite{ahuja2011srlg,stanic2010active}.
Recognizing that we often do not get to choose where quantum capabilities are placed, in this work, we focus on the probe construction problem.
The fact that probes are quantum brings in new objectives in designing the construction algorithms.
Due to factors such as attenuation, noise and decoherence, quantum probes are inherently sensitive to the the distance they travel.
While existing heuristics aim to minimize the total cost~\cite{ahuja2011srlg,stanic2010active} or the total length of probes in the context of classical networks, we introduce a new probe construction algorithm that optimizes the distance the longest probes traverse.

Assuming all observations from the probes are collected at a central site, we develop a \emph{network-level CUSUM} algorithm to quickly identify the problematic links.
Despite the CUSUM algorithm being a typical choice for detecting change~\cite{guha2025quantum,zou2019quickest,zou2018quickest}, much less is known on how to apply it in the setting of fault localization in a network.
For simplicity, we restrict our attention to the case where at most one link in a network suffers from a transmission drop.
To the best of our knowledge, the only previous attempt is a CUSUM-based approach proposed to identify outages in a power network~\cite{rovatsos2017statistical}, which outputs candidates of problematic links, and relies on further simulations to narrow down the search.
Using the probes generated by our construction algorithm, we are able to adapt the CUSUM algorithm to directly pinpoint the faulty link. 

Our contributions in this paper are twofold:
\begin{itemize}
\item We generalize and extend existing results on squeezing-augmented probes to the general entanglement-augmented probes, and characterize the quantum speedup in reducing change detection delay over a channel (\S~\ref{sec:probes}).
\item We present a scheme with quantum speedup in localizing transmission drop in a network, which incorporates a probe construction algorithm (\S~\ref{sec:tomography}), and a CUSUM-based fault localization algorithm (\S~\ref{sec:qcd}).
\end{itemize}
We evaluate the performance of our network-level CUSUM algorithm on both a line network, and an all-optical version of the 3-tier fat-tree network topology~\cite{al2008scalable} (\S~\ref{sec:simulation}). 
Finally, we discuss related work in \S~\ref{sec:related}, and conclude the paper in \S~\ref{sec:conclusion}.
\section{Quantum and Classical Probes over a Channel} \label{sec:probes}

In this section, we consider quantum and classical probes that are sent through a lossy channel with transmissivity $\eta_0 \in (0,1)$.
Suppose at an unknown time $\nu$, the transmissivity of the channel drops from $\eta_0 = \eta \in (0,1)$ to $\eta_1 = \eta_0 \eta_{d}$, $\eta_{d} \in (0,1)$.
We first review existing results from~\cite{guha2025quantum} in \S~\ref{sec:probes:review}, and then proceed to derive the quantum speedup in reducing detection delay over a channel in \S~\ref{sec:probes:results}.

\subsection{Preliminaries on probes} \label{sec:probes:review}

We begin with a brief introduction to the quantum and classical probes used in this paper, followed by a short description of how the probes are used to detect a transmissivity drop in a channel.

\paragraph{Pre-change and post-change distributions.}
As in~\cite{guha2025quantum}, the classical (or quasi-classical) benchmark we consider is a stream of $T$ coherent-state pulses $\ket{\alpha}^{\otimes T}$, with $\alpha \in \mathbb{R}$ and mean photon number $\alpha^2 = N + N_a$.
Here $0 < N_a << N$ is introduced to fairly compare the classical case with its quantum-augmented counterparts.
At the receiving monitor, $\ket{\alpha}^{\otimes T}$ generates $T$ i.i.d. Gaussian-distributed random variables $X_1, X_2, \dots, X_T$, where 
\begin{equation} \label{eqn:classical_Gaussian}
X_t \sim \mathcal{N}^{(j)}(\sqrt{\eta_j} \alpha, \, \frac{1}{4}), 
\end{equation}
and $j = 0$ for all $0 \leq t < \nu$ (pre-change), $j = 1$ for all $\nu \leq t \leq T$ (post-change).

Next, we consider the entanglement-augmented probes.
Specifically, $T$ blocks of $n \in \mathbb{Z}_+$ coherent-state pulses $\ket{\alpha}^{\otimes n}$ are augmented by a continuous variable entangled state generated by splitting a squeezed-vaccum state $\ket{0;s}$, $s \in \mathbb{R}_{+}$, where 
\begin{align}
& N = \alpha^2, \label{eqn:N_alpha} \\
& \sinh^2(s) = n N_a \label{eqn:N_a_s}.
\end{align}
In this case, the receiving monitor outputs $T$ i.i.d. $n$-dimensional Gaussian random vectors $\bm{X}_1, \bm{X}_2, \dots, \bm{X}_T$, where 
\begin{equation} \label{eqn:quantum_Gaussian}
\bm{X}_t \sim \mathcal{N}_n^{(j)}(\bm{\mu}^{(j)}, \, \bm{\Sigma}^{(j)}),
\end{equation}
and $j = 0$ pre-change, $j = 1$ post-change.
Denote $\bm{I}$ as the $n \times n$ identity matrix, $\bm{J}$ as the $n \times n$ matrix of ones, and $\bm{u}$ as the $n$-dim column vector of ones. 
The mean vectors $\bm{\mu}^{(j)}$ and the covariance matrices $\bm{\Sigma}^{(j)}$ are given by
\begin{align}
& \bm{\mu}^{(j)} = \sqrt{\eta_j} \alpha \cdot \bm{u}, \nonumber \\
& \bm{\Sigma}^{(j)} = \frac{1}{4} \bm{I} - \frac{\eta_j (1-e^{-2s})}{4n} \cdot \bm{J}. \label{eqn:cov}
\end{align}

In particular, when $n=1$, we have a stream of $T$ displaced squeezed states $\ket{\alpha;r}^{\otimes T}$, with $\alpha^2 = N$ and $\sinh^2(r) = N_a$.
The receiving monitor outputs $T$ i.i.d. Gaussian-distributed random variables $X_1, X_2, \dots, X_T$, where 
\begin{equation} \label{eqn:sqeezed_Gaussian}
X_t \sim \mathcal{N}_1^{(j)}(\sqrt{\eta_j} \alpha, \, \sigma_j^2),
\end{equation}
and $\sigma_j^2 = \frac{1}{4} (\eta_j e^{-2r} + 1-\eta_j)$, and $j = 0$ pre-change, $j = 1$ post-change.
While the physical processes of generating the probes for $n=1$ and $n \geq 2$ are different~\cite{guha2025quantum}, mathematically, there is in fact no need to distinguish these two cases.
We only make the distinction when discussing previous results in relation to ours in \S~\ref{sec:probes:results}.
Whenever necessary, we refer to $n=1$ as the squeezing augmentation case, and $n \geq 2$ the general quantum augmentation case. 

\paragraph{Quickest change detection.}
Let $X_1, X_2, \dots, X_T$ be the observations from any type of probes introduced above.
Suppose a transmission drop occurs at an unknown timestep $0< \nu \leq T$, and $f_0(x),f_1(x)$ are the pre-change and post-change distributions respectively (e.g., (\ref{eqn:classical_Gaussian}) for classical probes, (\ref{eqn:quantum_Gaussian}) for quantum probes).
To detect the change point $\nu$, the cumulative sum (CUSUM) algorithm keeps a counter $\text{CUSUM}[t]$ that can be updated recursively using
\[
\text{CUSUM}[t] = \max\left\{0, \text{CUSUM}[t-1] + \log{\frac{f_1(X_t)}{f_0(X_t)}}\right\}, \forall t \in [T],
\]
and $\text{CUSUM}[0]=0$.
Let $\gamma$ be the minimum time to false alarm that one is willing to tolerate, and denote $\tau$ as the minimum $t$ such that $\text{CUSUM}[t] \geq \log{\gamma}$.
Then the CUSUM algorithm outputs $\tau$ as the estimate for the change point $\nu$.
Informally, as $\gamma \to \infty$, the CUSUM algorithm achieves a worst-case detection delay~\cite{L1971} of 
\begin{equation} \label{exp:WADD_channel}
\frac{\log{\gamma}}{D(f_1 || f_0)} (1+o(1)),
\end{equation}
where $D(f_1 || f_0) = \int_{-\infty}^{\infty} f_1(x) \log{\left(\frac{f_1(x)}{f_0(x)} \right)} \,dx$ is the \emph{Kullback-Leibler (KL) divergence} from $f_0$ to $f_1$.

\paragraph{Kullback-Leibler (KL) divergence.}
Given two $n$-dimensional Gaussian distributions $\mathcal{N}_n^{(j)}(\bm{\mu}^{(j)}, \, \bm{\Sigma}^{(j)})$, $j=0,1$, with non-singular covariance matrices $\bm{\Sigma}^{(j)}$ and $\bm{\Sigma}^{(j)}$, the KL divergence from $\mathcal{N}_n^{(0)}$ to $\mathcal{N}_n^{(1)}$ is 
\begin{equation} \label{eqn:KL_quantum_n}
D\left(\mathcal{N}_n^{(1)}||\mathcal{N}_n^{(0)}\right) 
= \frac{1}{2} 
\left( \Tr{\left(\bm{\Sigma}^{(0)}\right)^{-1} \bm{\Sigma}^{(1)}} 
+ \left(\bm{\mu}^{(0)} - \bm{\mu}^{(1)}\right)^{T} \left(\bm{\Sigma}^{(0)}\right)^{-1} \left(\bm{\mu}^{(0)} - \bm{\mu}^{(1)}\right) 
-n 
+ \ln{\frac{\determinant{\bm{\Sigma}^{(0)}}}{\determinant{\bm{\Sigma}^{(1)}}}}\right),
\end{equation}
where $\determinant{\bm{\Sigma}^{(j)}}$ denotes the determinant of $\bm{\Sigma}_j$, $j = 0,1$.
By (\ref{eqn:classical_Gaussian}) and (\ref{eqn:KL_quantum_n}) in the case of $n=1$, the KL divergence for each coherent state pulse is
\begin{equation} \label{eqn:KL_classical}
D(\mathcal{N}^{(1)}||\mathcal{N}^{(0)})
= 2 (N + N_a) (\sqrt{\eta_0} - \sqrt{\eta_1})^2
= 2 (N + N_a) \eta (1 - \sqrt{\eta_{d}})^2.
\end{equation}
To get the per-pulse KL divergence for entanglement-augmented coherent states, we take the average over $n$ pulses in that block.
That is, we use $\frac{1}{n} D\left(\mathcal{N}_n^{(1)}||\mathcal{N}_n^{(0)}\right) $ as a fair comparison to the classical case (\ref{eqn:KL_classical}), with $D\left(\mathcal{N}_n^{(1)}||\mathcal{N}_n^{(0)}\right)$ as shown in (\ref{eqn:KL_quantum_n}).

\subsection{The quantum speedup in reducing detection delay} \label{sec:probes:results}

Since the worst-case detection delay is inversely proportional to the KL divergence~\eqref{exp:WADD_channel}, to quantify how quantum probes fare against the classical benchmark in speeding up the change point detection, we consider the ratio of quantum to classical KL divergences. 
To start with, we derive the explicit form of this ratio as a function of $\eta \in (0,1)$, $\eta_{d} \in (0,1)$, $N \in \mathbb{Z}_{+}$ and $N_a \in \mathbb{Z}_{+}$, parametrized by $n \in \mathbb{Z}_{+}$.
\begin{lemma} \label{lem:q_n}
Let $q_n = q_n(\eta, \eta_{d}, N, N_a) = \frac{D\left(\mathcal{N}_n^{(1)}||\mathcal{N}_n^{(0)}\right)}{n \cdot D(\mathcal{N}^{(1)}||\mathcal{N}^{(0)})}$ be the ratio of quantum to classical KL divergences, 
and denote $c_n = \frac{2\sqrt{nN_a}}{\sqrt{nN_a+1}+\sqrt{nN_a}}$, 
\begin{equation} \label{eqn:q_n}
q_n
= \frac{1}{4(N+N_a)n} \left( 
\frac{c_n}{1- c_n \eta} \cdot \frac{1-\eta_{d}}{(1-\sqrt{\eta_{d}})^2}
-\frac{1}{\eta} \cdot \frac{\ln{\frac{1 - c_n\eta\eta_{d}}{1 - c_n\eta }}}{(1-\sqrt{\eta_{d}})^2}
+ \frac{4Nn}{1-c_n \eta}
\right).
\end{equation}
\end{lemma}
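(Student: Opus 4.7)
The plan is to compute $D(\mathcal{N}_n^{(1)} \| \mathcal{N}_n^{(0)})$ directly from the multivariate Gaussian KL formula (\ref{eqn:KL_quantum_n}), exploiting the simple spectral structure of $\bm{\Sigma}^{(j)} = \frac{1}{4}\bm{I} - \frac{\eta_j(1-e^{-2s})}{4n}\bm{J}$, and then divide by $n\cdot D(\mathcal{N}^{(1)}\|\mathcal{N}^{(0)})$, which is already in closed form via (\ref{eqn:KL_classical}).

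First I would show that $c_n = 1 - e^{-2s}$. Using $\sinh^2(s) = nN_a$, one has $\cosh(2s) = 1 + 2nN_a$ and $\sinh(2s) = 2\sqrt{nN_a(nN_a+1)}$, hence $e^{-2s} = \cosh(2s) - \sinh(2s) = 1 + 2nN_a - 2\sqrt{nN_a(nN_a+1)}$. Rationalizing $c_n$ by multiplying numerator and denominator by $\sqrt{nN_a+1} - \sqrt{nN_a}$ gives $c_n = 2\sqrt{nN_a(nN_a+1)} - 2nN_a$, matching $1-e^{-2s}$. Thus $\bm{\Sigma}^{(j)} = \frac{1}{4}\bm{I} - \frac{\eta_j c_n}{4n}\bm{J}$.

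Next I would diagonalize using $\bm{J} = n\bm{P}_{\bm{u}}$, where $\bm{P}_{\bm{u}} = \bm{u}\bm{u}^T/n$ is the rank-one projector onto $\bm{u}$, and $\bm{P}_\perp = \bm{I}-\bm{P}_{\bm{u}}$. This gives $\bm{\Sigma}^{(j)} = \tfrac{1-\eta_j c_n}{4}\bm{P}_{\bm{u}} + \tfrac{1}{4}\bm{P}_\perp$, so $(\bm{\Sigma}^{(0)})^{-1} = \tfrac{4}{1-\eta c_n}\bm{P}_{\bm{u}} + 4\bm{P}_\perp$. The four ingredients in (\ref{eqn:KL_quantum_n}) then drop out: (i) $\Tr{(\bm{\Sigma}^{(0)})^{-1}\bm{\Sigma}^{(1)}} - n = \frac{1-\eta_1 c_n}{1-\eta c_n} - 1 = \frac{c_n \eta(1-\eta_d)}{1-\eta c_n}$; (ii) since $\bm{\mu}^{(0)}-\bm{\mu}^{(1)} = (\sqrt{\eta_0}-\sqrt{\eta_1})\alpha\,\bm{u}$ lies in the $\bm{u}$-eigenspace and $\bm{u}^T\bm{P}_{\bm{u}}\bm{u}=n$, the Mahalanobis term equals $\frac{4nN\eta(1-\sqrt{\eta_d})^2}{1-\eta c_n}$ (using $\alpha^2 = N$); (iii) $\det\bm{\Sigma}^{(j)} = (1/4)^n(1-\eta_j c_n)$, so $\ln\frac{|\bm{\Sigma}^{(0)}|}{|\bm{\Sigma}^{(1)}|} = \ln\frac{1-\eta c_n}{1-\eta\eta_d c_n}$.

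Finally, I would collect these into $2D(\mathcal{N}_n^{(1)}\|\mathcal{N}_n^{(0)})$, divide by $2n\cdot D(\mathcal{N}^{(1)}\|\mathcal{N}^{(0)}) = 4n(N+N_a)\eta(1-\sqrt{\eta_d})^2$ from (\ref{eqn:KL_classical}), and cancel factors of $\eta$ term by term to obtain (\ref{eqn:q_n}); the sign flip in the logarithmic term comes from $-\ln\frac{1-\eta c_n}{1-\eta\eta_d c_n} = \ln\frac{1-\eta c_n}{1-\eta\eta_d c_n}^{-1} = -\ln\frac{1-\eta c_n}{1-\eta\eta_d c_n}$, written as $-\ln\frac{1-c_n\eta\eta_d}{1-c_n\eta}$ in the statement. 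No step is analytically deep; the only mild obstacle is the identification $c_n = 1-e^{-2s}$ in Step 1, which recasts the parameter $s$ in terms of $nN_a$ and brings the covariance into a form where the spectral decomposition makes every trace, quadratic form, and determinant transparent.
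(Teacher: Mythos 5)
Your proposal is correct and follows essentially the same route as the paper: verify $1-e^{-2s}=c_n$, exploit the ``$\tfrac14\bm{I}$ plus rank-one'' structure of $\bm{\Sigma}^{(j)}$ to get the inverse, trace, quadratic form, and determinant in closed form, and divide by the classical KL divergence. The only (cosmetic) difference is that you obtain all four ingredients at once from the spectral decomposition $\bm{\Sigma}^{(j)}=\tfrac{1-\eta_j c_n}{4}\bm{P}_{\bm{u}}+\tfrac14\bm{P}_\perp$, whereas the paper computes the determinant by row reduction and the inverse via Sherman--Morrison; your packaging is arguably slightly cleaner but yields identical intermediate expressions.
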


\begin{proof}
First note that from (\ref{eqn:cov}), covariance matrix $\bm{\Sigma}^{(j)}$ has uniform diagonal entries $d_j = \frac{1}{4}\left(1- \frac{\eta_j (1-e^{-2s})}{4n} \right)$ and uniform off-diagonal entries $r_j = d_j - \frac{1}{4}$.
Such a structure allows us to explicitly write out both $\determinant{\bm{\Sigma}^{(j)}}$ and $\left(\bm{\Sigma}^{(j)}\right)^{-1}$, where
\[
\determinant{\bm{\Sigma}^{(j)}} = 
\begin{vmatrix}
d_j+(n-1)r_j & d_j+(n-1)r_j & \cdots & d_j+(n-1)r_j \\
r_j          & d_j          & \cdots & r_j \\
\vdots       & \vdots           & \ddots & \vdots \\
r_j          & r_j          & \cdots & d_j
\end{vmatrix}
= (\frac{1}{4} + n r_j) \cdot \frac{1}{4^{n-1}}.
\]
Hence the last term in (\ref{eqn:KL_quantum_n}) is
\begin{equation} \label{eqn:temp_ln_Sigmas}
\ln{\frac{\determinant{\bm{\Sigma}^{(0)}}}{\determinant{\bm{\Sigma}^{(1)}}}}
= - \ln{\frac{1-\eta\eta_{d}(1-e^{-2s})}{1-\eta(1-e^{-2s})}}.
\end{equation}
By (\ref{eqn:N_a_s}), $e^{-2s} = 1-2\sqrt{nN_a}(\sqrt{nN_a+1}-\sqrt{nN_a})$, and
\begin{equation} \label{eqn:s_c_n}
1-e^{-2s} 
= \frac{2\sqrt{nN_a}}{\sqrt{nN_a+1}+\sqrt{nN_a}}
\triangleq c_n.
\end{equation}
Substitution of (\ref{eqn:s_c_n}) into (\ref{eqn:temp_ln_Sigmas}) yields
\begin{equation} \label{eqn:ln_Sigmas}
\ln{\frac{\determinant{\bm{\Sigma}^{(0)}}}{\determinant{\bm{\Sigma}^{(1)}}}}
= - \ln{\frac{1 - c_n\eta\eta_{d}}{1 - c_n\eta }}.
\end{equation}

To derive $\left(\bm{\Sigma}^{(0)}\right)^{-1}$, observe that $\bm{\Sigma}^{(j)}$ (\ref{eqn:cov}) is the sum of a diagonal matrix and a rank-$1$ matrix, that is, $\bm{\Sigma}^{(j)} = \frac{1}{4} \bm{I} + \bm{u} (r_j\bm{u})^T$.
By the Sherman-Morrison formula,
\[
\left(\bm{\Sigma}^{(0)}\right)^{-1}
= \left(\frac{1}{4} \bm{I}\right)^{-1} - \frac{\left(\frac{1}{4} \bm{I}\right)^{-1} \cdot \bm{u} (r_0\bm{u})^T \cdot \left(\frac{1}{4} \bm{I}\right)^{-1}}{1 + (r_0\bm{u})^T \left(\frac{1}{4} \bm{I}\right)^{-1} \bm{u}}
= 4\bm{I} + \frac{4c_n \eta}{n(1-c_n \eta)}\bm{J}.
\]
Then we are ready to write out the first two term in (\ref{eqn:KL_quantum_n}),
\begin{equation} \label{eqn:trace}
\Tr{\left(\bm{\Sigma}^{(0)}\right)^{-1} \bm{\Sigma}^{(1)}}
= n + \frac{c_n \eta}{1- c_n \eta}(1-\eta_{d}),
\end{equation}
\begin{equation} \label{eqn:2nd_term}
\left(\bm{\mu}^{(0)} - \bm{\mu}^{(1)}\right)^{T} \left(\bm{\Sigma}^{(0)}\right)^{-1} \left(\bm{\mu}^{(0)} - \bm{\mu}^{(1)}\right)
= (\sqrt{\eta_0}-\sqrt{\eta_1})^2 
\cdot \alpha^2 
\cdot \bm{u}^T \left(\bm{\Sigma}^{(0)}\right)^{-1} \bm{u}
= \frac{4Nn\eta (1-\sqrt{\eta_{d}})^2}{1-c_n \eta},
\end{equation}
where the last equality in (\ref{eqn:2nd_term}) uses (\ref{eqn:N_alpha}).
Substituting (\ref{eqn:trace}), (\ref{eqn:2nd_term}) and (\ref{eqn:ln_Sigmas}) into (\ref{eqn:KL_quantum_n}), we arrive at the final form of $\frac{1}{n}D\left(\mathcal{N}_n^{(1)}||\mathcal{N}_n^{(0)}\right)$,
\begin{equation} \label{eqn:KL_quantum_n_final}
\frac{1}{n}D\left(\mathcal{N}_n^{(1)}||\mathcal{N}_n^{(0)}\right)
= \frac{1}{2n} \left( 
\frac{c_n \eta}{1- c_n \eta}(1-\eta_{d})
- \ln{\frac{1 - c_n\eta\eta_{d}}{1 - c_n\eta }}
+ \frac{4Nn\eta}{1-c_n \eta} (1-\sqrt{\eta_{d}})^2
\right).
\end{equation}
Therefore, the form of $q_n$ in (\ref{eqn:q_n}) follows from dividing the per-pulse quantum KL divergence (\ref{eqn:KL_quantum_n_final}) by the classical KL divergence (\ref{eqn:KL_classical}).
\end{proof}

Using the explicit form of the per-pulse quantum KL divergence (\ref{eqn:KL_quantum_n_final}), we are able to rigorously derive several results that have been previously inferred through numerical evaluations in~\cite{guha2025quantum}.
It has been shown that the KL divergence sees a sharp increase when augmenting a tiny amount of squeezing.
That is, when $n=1$, $\lim_{N_a \to 0} \frac{\partial}{\partial N_a} \frac{1}{n}D\left(\mathcal{N}_n^{(1)}||\mathcal{N}_n^{(0)}\right) = \lim_{N_a \to 0} \frac{\partial}{\partial N_a} D\left(\mathcal{N}_1^{(1)}||\mathcal{N}_1^{(0)}\right) = \infty$.
It has also been observed without proof that the same holds for general $n$.
By taking the partial derivative of (\ref{eqn:KL_quantum_n_final}), it is straightforward to verify that indeed,
\[
\lim_{N_a \to 0} \frac{\partial}{\partial N_a} \frac{1}{n}D\left(\mathcal{N}_n^{(1)}||\mathcal{N}_n^{(0)}\right)
= 2N\sqrt{n} \eta^2 (1-\sqrt{\eta_{d}})^2 \lim_{N_a \to 0} \frac{1}{\sqrt{N_a}}
=\infty.
\]
Additionally, numerical evaluations (e.g. Figure 2 in~\cite{guha2025quantum}, with $n=2^i$, $i \in [8]$) suggest that for fixed $N$, $N_a$, $\eta$ and $\eta_{d}$, the quantum KL divergence $\frac{1}{n}D\left(\mathcal{N}_n^{(1)}||\mathcal{N}_n^{(0)}\right)$ converges to $\frac{2N\eta(1-\sqrt{\eta_{d}})^2}{1-\eta}$ as $n$ becomes large.
This directly follows from taking the limit of (\ref{eqn:KL_quantum_n_final}) as $n$ goes to infinity.

With Lemma~\ref{lem:q_n} in hand, we are well-equipped to understand, given a  set of parameters, whether quantum probes outperform the classical ones, and if so, by how much.
Moreover, we can explore how the quantum speedup factor changes with respect to each variable, and what is the asymptotic behavior of such speedup.

\begin{lemma} \label{lem:q_n_monotonicity}
$q_n = q_n(\eta, \eta_{d}, N, N_a)$ exhibits the following properties:
\begin{enumerate}[label={(\arabic*)}]
\item \label{lem:q_n_monotonicity:eta}
For fixed values of $n$, $\eta_{d}$, $N$ and $N_a$, $q_n$ increases in $\eta$, and
\begin{equation} \label{eqn:lim_eta_to_0}
\lim_{\eta \to 0} q_n
= \frac{N}{N+N_a};
\end{equation}

\item \label{lem:q_n_monotonicity:eta_d}
For fixed values of $n$, $\eta$, $N$ and $N_a$, $q_n$ increases in $\eta_{d}$, and 
\begin{equation} \label{eqn:lim_eta_tap_to_0}
\lim_{\eta_{d} \to 1} q_n
= \frac{1}{2n(N+N_a)(1-c_n\eta)}\left( \frac{c_n^2 \eta}{1-c_n \eta} + 2nN \right);
\end{equation}

\item \label{lem:q_n_monotonicity:N}
For fixed values of $n$, $\eta$, $\eta_{d}$ and $N_a$, $q_n$ increases in $N$ when $N > \frac{c_n^2 t (1+\sqrt{\eta_{d}})^2}{4nN_a (1-c_n \eta \eta_{d})}$, and
\begin{equation} \label{eqn:lim_qn_N}
\lim_{N \to \infty} q_n = \frac{1}{1-c_n\eta};
\end{equation}

\item \label{lem:q_n_monotonicity:n}
For fixed values of $\eta$, $\eta_{d}$, $N$ and $N_a$ satisfying $N\eta>b_dN_{a}\left(1-\eta\right)$ with $b_d = \frac{1+\sqrt{\eta_{d}}}{1-\sqrt{\eta_{d}}}$, $q_n$ increases in $n$ when $n \geq n_0$, where
\begin{equation} \label{eqn:n_0}
n_0 =
\begin{cases}
\max\left( \frac{b_d\left(3-4\eta\right)}{4\left(N\eta-b_dN_{a}\left(1-\eta\right)\right)},  \frac{1}{4N_a(1-\eta)}\right)
& \text{if }  0 < \eta < \frac{1}{2}, \\
\max\left( \frac{b_d}{4\left(N\eta-b_dN_{a}\left(1-\eta\right)\right)},
\frac{1}{4N_a(1-\eta)}\right)
& \text{if } \frac{1}{2} \leq \eta < 1,
\end{cases}
\end{equation}
and 
\begin{equation} \label{eqn:lim_n_q_n}
\lim_{n \to \infty} q_n = \frac{N}{(N+N_a)(1-\eta)};
\end{equation}

\item
For fixed values of $n$, $\eta$, $\eta_{d}$ and $N$:
    \begin{enumerate}[label={(5.\arabic*)}]
    \item \label{lem:q_n_monotonicity:Na:advantage} 
    For any $\varepsilon > 0$, denote $c_{n,\varepsilon} = \frac{2\sqrt{n\varepsilon}}{\sqrt{n\varepsilon+1}+\sqrt{n\varepsilon}}$, if $\varepsilon \leq N_a \leq \frac{c_{n,\varepsilon}N\eta}{1-c_{n,\varepsilon}\eta}$, we have $q_n >1$;
    \item 
    $\lim_{N_a \to 0} \frac{\partial}{\partial N_a} q_n = \sqrt{n} \eta \lim_{N_a \to 0} \frac{1}{\sqrt{N_a}} = \infty;$
    \item \label{lem:q_n_monotonicity:Na:dec} 
    If $8Nn\left(1-\eta\right)>\eta b_d\left(1-\eta_{d}\right)$ with $b_d$ defined as above, $q_n$ decreases in $N_a$ when $N_a \geq N_{a,0}$, where
    \begin{equation} \label{ineq:ub_Na}
    \resizebox{.83\textwidth}{!}{$
    \begin{aligned}
    N_{a,0} = 
    &  \frac{8Nn\left(3\eta-1\right)+b_d\left(4\eta-1\right)}{4n\left(8Nn\left(1-\eta\right)-\eta b_d\left(1-\eta_{d}\right)\right)} \\
    & + \frac{\sqrt{8Nn\left(b_d+4Nn\eta\right)\left(8Nn\left(1-\eta\right)-\eta b_d\left(1-\eta_{d}\right)\right)+\left(8Nn\left(3\eta-1\right)+b_d\left(4\eta-1\right)+4\right)^{2}}}{4n\left(8Nn\left(1-\eta\right)-\eta b_d\left(1-\eta_{d}\right)\right)}.
    \end{aligned}
    $}
    \end{equation}
    \end{enumerate}
\end{enumerate}
\end{lemma}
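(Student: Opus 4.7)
The approach is, for each property, to compute the relevant partial derivative of (\ref{eqn:q_n}) (or of the per-pulse form (\ref{eqn:KL_quantum_n_final})), determine its sign, and handle each limit by Taylor expansion. It helps throughout to split $q_n$ into its $N$-free and $N$-linear parts: writing $q_n = \frac{B}{4n(N+N_a)} + \frac{N}{(N+N_a)(1-c_n\eta)}$ with $B := \frac{c_n(1-\eta_{d})}{(1-c_n\eta)(1-\sqrt{\eta_{d}})^2} - \frac{1}{\eta(1-\sqrt{\eta_{d}})^2}\ln\frac{1-c_n\eta\eta_{d}}{1-c_n\eta}$, the structure of each monotonicity claim becomes transparent.

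For \ref{lem:q_n_monotonicity:eta} and \ref{lem:q_n_monotonicity:eta_d}, $c_n$ does not depend on $\eta$ or $\eta_{d}$, so the derivatives are direct and their positivity reduces (after clearing a common denominator) to a polynomial inequality. The limit (\ref{eqn:lim_eta_to_0}) follows from the expansion $\ln\frac{1-c_n\eta\eta_{d}}{1-c_n\eta} = c_n\eta(1-\eta_{d}) + O(\eta^2)$, which makes the first two terms of (\ref{eqn:q_n}) cancel as $\eta \to 0$, leaving only the third term, which contributes $N/(N+N_a)$. The limit (\ref{eqn:lim_eta_tap_to_0}) is more delicate because the first two terms each diverge like $1/(1-\sqrt{\eta_{d}})$; changing variables to $u = 1-\sqrt{\eta_{d}}$ and expanding to second order in $u$, the $1/u$ divergences cancel, and the finite remainder combined with the third term reproduces (\ref{eqn:lim_eta_tap_to_0}). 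For \ref{lem:q_n_monotonicity:N}, the decomposition yields $\partial q_n/\partial N = \frac{4nN_a - B(1-c_n\eta)}{4n(N+N_a)^2(1-c_n\eta)}$, so the sign of $\partial q_n/\partial N$ depends only on $\eta, \eta_{d}, N_a, n$ through $B(1-c_n\eta) < 4nN_a$; applying $\ln(1+y) \ge y/(1+y)$ with $y = \frac{c_n\eta(1-\eta_{d})}{1-c_n\eta}$ gives $B(1-c_n\eta) \le \frac{c_n^2\eta(1+\sqrt{\eta_{d}})^2}{1-c_n\eta\eta_{d}}$, from which the stated threshold condition follows as a sufficient bound for monotonicity. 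The limit (\ref{eqn:lim_qn_N}) is immediate from the decomposition.

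The main obstacle is \ref{lem:q_n_monotonicity:n}, together with \ref{lem:q_n_monotonicity:Na:dec}, because $n$ and $N_a$ enter $q_n$ both explicitly and through $c_n$. The key auxiliary identities are $\frac{\partial c_n}{\partial n} = \frac{2N_a(1-c_n)}{c_n + 2nN_a}$ and $\frac{\partial c_n}{\partial N_a} = \frac{2n(1-c_n)}{c_n + 2nN_a}$, derived cleanly from the rationalization $c_n = 2\sqrt{nN_a(nN_a+1)} - 2nN_a$ (equivalently, $1-c_n = (\sqrt{nN_a+1}-\sqrt{nN_a})^2$). For \ref{lem:q_n_monotonicity:n}, after the chain rule $\partial q_n/\partial n$ is a rational function; clearing $(1-c_n\eta)^2(1-c_n\eta\eta_{d})$ and bounding the logarithm by $\ln(1+y) \le y - y^2/2 + y^3/3$, the numerator becomes a polynomial in $c_n, \eta, \eta_{d}, N, N_a$ whose leading positive term (from the $\frac{4Nn\eta}{1-c_n\eta}$ summand) dominates under the assumption $N\eta > b_d N_a(1-\eta)$. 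The piecewise form of $n_0$ in (\ref{eqn:n_0}) arises from splitting on the sign of $3-4\eta$ that controls an error term, combined with the auxiliary requirement that $c_n$ be bounded away from $0$, which yields the companion threshold $n \ge 1/(4N_a(1-\eta))$. The limit (\ref{eqn:lim_n_q_n}) is immediate from $c_n \to 1$. For \ref{lem:q_n_monotonicity:Na:dec}, the analogous chain-rule computation renders $\partial q_n/\partial N_a$ as a rational function whose numerator, after clearing and expanding the log term, becomes a quadratic in $N_a$; the condition $8Nn(1-\eta) > \eta b_d(1-\eta_{d})$ makes the leading coefficient positive, and $N_{a,0}$ in (\ref{ineq:ub_Na}) is the larger root from the quadratic formula.

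Finally, \ref{lem:q_n_monotonicity:Na:advantage} follows from $B \ge 0$ (by $\ln(1+y) \le y$), giving $q_n \ge \frac{N}{(N+N_a)(1-c_n\eta)}$; since $c_n$ is increasing in $N_a$, $c_n \ge c_{n,\varepsilon}$ whenever $N_a \ge \varepsilon$, and the stated upper bound $N_a \le \frac{c_{n,\varepsilon}N\eta}{1-c_{n,\varepsilon}\eta}$ is precisely the range making $\frac{N}{(N+N_a)(1-c_{n,\varepsilon}\eta)} \ge 1$ (strictness coming from $\ln(1+y) < y$ for $y > 0$). Property~(5.2) is the quickest: only the $\frac{4Nn\eta}{1-c_n\eta}$ summand of (\ref{eqn:q_n}) contributes to the divergent $N_a$-derivative (contributions from the first two terms cancel to leading order), and the divergence is driven by $\partial c_n/\partial N_a \sim \sqrt{n/N_a}$ as $N_a \to 0$, producing the stated asymptotic $\sqrt{n}\eta/\sqrt{N_a}$.
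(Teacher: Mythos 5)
Your global strategy (differentiate, control the logarithm with $\tfrac{y}{1+y}\le\ln(1+y)\le y$ or truncated Taylor bounds, and get the limits by expansion) is the natural one, and several pieces are correct and cleanly done: the decomposition $q_n=\frac{B}{4n(N+N_a)}+\frac{N}{(N+N_a)(1-c_n\eta)}$ with $B=\frac{y-\ln(1+y)}{\eta(1-\sqrt{\eta_d})^2}$, $y=\frac{c_n\eta(1-\eta_d)}{1-c_n\eta}$; the identities $1-c_n=(\sqrt{nN_a+1}-\sqrt{nN_a})^2$ and $\partial c_n/\partial n=\frac{2N_a(1-c_n)}{c_n+2nN_a}$; all four limits \eqref{eqn:lim_eta_to_0}, \eqref{eqn:lim_eta_tap_to_0}, \eqref{eqn:lim_qn_N}, \eqref{eqn:lim_n_q_n}; and the arguments for \ref{lem:q_n_monotonicity:Na:advantage} (via $B>0$ and $c_n\ge c_{n,\varepsilon}$) and for property (5.2). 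However, your treatment of part \ref{lem:q_n_monotonicity:N} contains a genuine logical gap: your own (correct) formula $\partial q_n/\partial N=\frac{4nN_a-B(1-c_n\eta)}{4n(N+N_a)^2(1-c_n\eta)}$ shows the sign is \emph{independent of $N$}, so what your bound $B(1-c_n\eta)\le\frac{c_n^2\eta(1+\sqrt{\eta_d})^2}{1-c_n\eta\eta_d}$ actually yields is the $N$-free sufficient condition $4nN_a>\frac{c_n^2\eta(1+\sqrt{\eta_d})^2}{1-c_n\eta\eta_d}$, i.e., that the stated threshold is less than $1$. The assertion that "the stated threshold condition follows" is not valid: $N$ exceeding that threshold does not imply $4nN_a$ exceeds the relevant quantity, and for parameters with $4nN_a<B(1-c_n\eta)$ (e.g., $N_a$ small, $\eta,\eta_d$ near $1$) the derivative is negative for every $N$. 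You must either prove the $N$-free condition under the lemma's hypotheses or explicitly reconcile your computation with the statement.

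The second substantive issue is that the hardest claims are only described, not proved. For parts \ref{lem:q_n_monotonicity:eta} and \ref{lem:q_n_monotonicity:eta_d}, the derivative retains a $\ln\frac{1-c_n\eta\eta_d}{1-c_n\eta}$ term, so positivity does not "reduce to a polynomial inequality after clearing a common denominator"; you must name the logarithm bound you apply and verify it points in the right direction (and that the resulting polynomial inequality actually holds for all admissible parameters, which is the entire content of the monotonicity claim). For parts \ref{lem:q_n_monotonicity:n} and \ref{lem:q_n_monotonicity:Na:dec}, the substance of the lemma \emph{is} the explicit thresholds \eqref{eqn:n_0} and \eqref{ineq:ub_Na}; a plan of the form "the numerator becomes a polynomial whose leading term dominates" or "$N_{a,0}$ is the larger root of a quadratic" does not establish that the specific expressions in the statement emerge. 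Indeed, \eqref{ineq:ub_Na} is not a bare quadratic-formula root (note the extra $+4$ inside the square under the radical relative to the term outside), so some additional estimate must be identified and justified; likewise the case split on $3-4\eta$ and the companion requirement $n\ge\frac{1}{4N_a(1-\eta)}$ in \eqref{eqn:n_0} need to be derived, not merely attributed to "an error term." As written, these parts are an outline rather than a proof.
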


For all claims on the monotonicity, we bound the corresponding partial derivatives. We refer the reader to Appendix A in the full version of our paper~\cite{zheng2025quantum} for the proof of Lemma~\ref{lem:q_n_monotonicity}. 

Lemma~\ref{lem:q_n_monotonicity}\ref{lem:q_n_monotonicity:eta}\ref{lem:q_n_monotonicity:N} not only corroborates an observation that quantum augmentation provides the most benefit in the large-$N$ and high-$\eta$ regime~\cite{guha2025quantum} when $n=1$, but also shows that the same holds in the general quantum augmentation case (arbitrary $n$).
Additionally, \cite{guha2025quantum} provides an approximate threshold $\frac{N\eta}{1-\eta}$ on $N_a$, above which squeezing augmentation does not outperform the classical benchmark in the case of $n=1$.
Lemma~\ref{lem:q_n_monotonicity}\ref{lem:q_n_monotonicity:Na:advantage} gives a range of $N_a$ where quantum speedup exists, for any $n$.
We note that, for typical values of $N$, $n$, $\eta$ and $\eta_d$ (e.g., see \S~\ref{sec:simulation} and~\cite{guha2025quantum}), $q_n$ is likely to first increase in $N_a$, and then decrease.
Lacking ways to directly find the $N_a$ that maximizes $q_n$, we settle for the result in Lemma~\ref{lem:q_n_monotonicity}\ref{lem:q_n_monotonicity:Na:dec}.
Nonetheless, $N_{a,0}$ may be of practical interest, as it narrows down the range of $N_a$ in which the optimal quantum speedup factor appears.
Too see that, given typical values such as $N=100$, $n=1$, $\eta=0.9$ and $\eta_d=0.8$, Lemma~\ref{lem:q_n_monotonicity}\ref{lem:q_n_monotonicity:Na:advantage} gives that the quantum speedup exists when $0<N_a<897.5$ by taking $\epsilon$ ranging from arbitrarily close to $0$ to $890$, and \cite{guha2025quantum} shows that such speedup disappears when $N_a$ is above roughly $900$; however, both bounds on $N_a$ are rendered irrelevant under the assumption that $N_a << N=100$.
In contrast, Lemma~\ref{lem:q_n_monotonicity}\ref{lem:q_n_monotonicity:Na:dec} indicates that quantum augmentation brings the most value when $N_a < N_{a,0} =20.9$.
This is evidence that assuming $N_a << N$ (\S~\ref{sec:probes:review}) makes sense.

Beyond generalizing and strengthening previous results, we can draw a few interesting conclusions from Lemma~\ref{lem:q_n_monotonicity}.
While smaller transmissivity drops (larger $\eta_d$) are generally harder to detect, the quantum speedup factor increases as the drop gets lighter (Lemma~\ref{lem:q_n_monotonicity}\ref{lem:q_n_monotonicity:eta_d}).
Moreover, with proper choices of parameters, the quantum speedup exists, and the limits in Lemma~\ref{lem:q_n_monotonicity}\ref{lem:q_n_monotonicity:eta_d}\ref{lem:q_n_monotonicity:N}\ref{lem:q_n_monotonicity:n} provide the respective upper bounds on how much quantum augmentation could help.

In spite of the generality of Lemma~\ref{lem:q_n_monotonicity}, we acknowledge that the lower bounds of $n$ (in Lemma~\ref{lem:q_n_monotonicity}\ref{lem:q_n_monotonicity:n}) and $N_a$ (in Lemma~\ref{lem:q_n_monotonicity}\ref{lem:q_n_monotonicity:Na:advantage}) are likely consequences of our proof technique, rather than being necessary.
We leave further improvements for future work.
\section{Probe Construction for a Network} \label{sec:tomography}

A network is a weighted graph $G = (V,E)$, with a set of \emph{monitors} $M \subseteq V$ that can \emph{send} and \emph{receive} probes.
If we think of each edge $e$ in $G$ as a channel with transmissivity $\eta_e$, the edge weight $w(e)$ of $e$ is set to $-\log{\eta_e}$.
Throughout this paper, we use the term \emph{link} in the context of a network, but refer to a link as an edge when we discuss the network in graph-theoretic terms.
In this section, we first formulate the problem in the language of Boolean network tomography (\S~\ref{sec:tomography:formulation}), and later introduce an algorithm for constructing the probes (\S~\ref{sec:tomography:algo}).

\subsection{The probe construction problem} \label{sec:tomography:formulation}

Throughout this section, we refer to transmissivity drops as \emph{faults}, and assume only a subset of $E$ in $G$ may be \emph{faulty}.
A \emph{probe} $P$ is a \emph{walk}\footnote{A \emph{walk} with endpoints $v_0$ and $v_k$ is a sequence $v_0, e_1, v_1, e_2, ..., e_k, v_k$, where $v_i \in V$ for all $0\leq i\leq k$, and $e_j \in E$ for all $1\leq i\leq k$.} with endpoints in the set of monitors $M$.\footnote{This definition corresponds to the Arbitrarily Controllable Routing (ACR) scenario in the classical tomography literature (\S~1.4 in~\cite{he2021network}).}
For example, cycles are allowed, and each edge can be traversed in both directions.
When the context is clear, we also think of $P$ as the set of edges the probe passes through, and the \emph{length} of $P$ is $l(P) = \sum_{e\in P} w(e) = -\log{(\prod_{e \in P} \eta_e)}$.
A probe $P$ is faulty, if an only if there exists an edge $e \in P$ that is faulty.
As we shall see in \S~\ref{sec:qcd}, the state of a probe is in fact an observation drawn from the pre- and post-change distributions described in \S~\ref{sec:probes}.
And the Boolean nature of the probes described in this section corresponds to whether the pre- and post-change distributions differ.

We call $F \subseteq E$ a \emph{fault set}, if all edges in $F$ are simultaneously faulty.
For a set of probes $\mathcal{P}$, the fault set $F$ is manifested through $\mathcal{P}_F = \{P \in \mathcal{P} \mid P \cap F \neq \emptyset \}$.
Two fault sets $F_1$ and $F_2$ are \emph{distinguishable} if and only if $\mathcal{P}_{F_1} \neq \mathcal{P}_{F_2}$.
A graph $G$ is \emph{identifiable} with respect to a set of all possible faults $\mathcal{F}$ given a set of monitors $M$, if for any pair of distinct fault sets $F_1, F_2 \in \mathcal{F}$, $F_1$ and $F_2$ are distinguishable.
In practice, we often do not get to select the network topology (the graph $G$), which nodes are quantum capable and can serve as monitors (monitor set $M$), and what part of a network may be faulty (the set of all possible faults $\mathcal{F}$).
For simplicity, we assume $G$ is identifiable w.r.t. the given $M$ and $\mathcal{F}$.\footnote{Relaxing this assumption is straightforward, the proposed algorithm can be easily adapted to identify a maximal subset of $\mathcal{F}$.}
And our goal is to construct a set of probes $\mathcal{P}$, such that every pair of distinct fault sets in $\mathcal{F}$ are distinguishable.

The problem of constructing $\mathcal{P}$ is very well-studied, both in the context of classical network tomography~\cite{cho2014localizing,stanic2010active}, and combinatorial group testing~\cite{cheraghchi2012graph,harvey2007non,spang2018unconstraining}, with the former literature focusing on finding heuristics to minimize the total length $\sum_{P \in \mathcal{P}} l(P)$ of the probes, and the latter putting an emphasis on the optimality of $\sizeof{\mathcal{P}}$.
The challenge posed by our problem is that, quantum probes are inherently sensitive to the distance they travel.
As a probe traverses through multiple edges in the graph, the transmissivities of edges (channels) multiply, and from \S~\ref{sec:probes:results}, it is always desirable to avoid long probes and low transmissivity whenever possible.
Therefore, we are interested in finding the set of probes $\mathcal{P}^{*}$ that minimizes the length of the longest probe:
\begin{problem}[Probe construction] \label{prob}
Construct $\mathcal{P}^{*} = \arg\min_{\mathcal{P}} \max_{P \in \mathcal{P}} l(P)$, where $\mathcal{P}$ is a set of probes such that every pair of distinct fault sets in $\mathcal{F}$ are distinguishable given $G$ and $M$.
\end{problem}

\subsection{Optimizing the longest probes}
\label{sec:tomography:algo}

\paragraph{The meta algorithm.}
Existing algorithms for constructing $\mathcal{P}$ commonly follow a meta algorithm shown as Algorithm~\ref{algo:meta}~\cite[Algorithm 25]{he2021network}.
This meta algorithm iteratively goes through each pair of distinct fault sets $F_1$ and $F_2$, and invokes a subroutine (Step~\ref{algo:meta:FindProbe}) to add a probe that distinguishes $F_1$ and $F_2$.
Here the tag $t_F$ (Step~\ref{algo:meta:t_F}) keeps track of $\mathcal{P}_F$ as the algorithm proceeds, and can be thought of as the syndrome of the fault set $F$. 
If $t_{F_1} \neq t_{F_2}$, $F_1$ and $F_2$ are distinguishable, as $\mathcal{P}_{F_1} \neq \mathcal{P}_{F_2}$.
Also observe that once $t_{F_1} \neq t_{F_2}$ starting from some iteration $i$, they remain unequal till the end, as $t_{F_1}$ and $t_{F_2}$ differ in the $i$ less significant bits.
\begin{algorithm}
\caption{A meta algorithm for probe construction~\cite{he2021network}} 
\label{algo:meta}
\textbf{Input}: A graph $G=(V,E)$, a set of all possible faults $\mathcal{F}$, and a set of monitors $M$; \\
\textbf{Output}: A set of probes $\mathcal{P}$.
\begin{algorithmic}[1]
\State $\mathcal{P} \gets \emptyset$
\State $t_F \gets 0$, for all $F \in \mathcal{F}$ \label{algo:meta:t_F}
\For{$F_1, F_2 \in \mathcal{F}$ such that $F_1 \neq F_2$} \label{algo:meta:loop_start}
    \If{$t_{F_1} = t_{F_2}$}
        \State $P = \Call{FindProbe}{G, M, F_1, F_2}$ \label{algo:meta:FindProbe}
        \For{$F \in \mathcal{F}$ such that $F$ is traversed by $p$}
            \State $t_{F} \gets t_{F} + 2^{\sizeof{\mathcal{P}}}$ 
        \EndFor
        \State $\mathcal{P} \gets \mathcal{P} \cup P$ \label{algo:meta:loop_end}
    \EndIf
\EndFor
\State \Return $\mathcal{P}$
\end{algorithmic}
\end{algorithm}

We also follow Algorithm~\ref{algo:meta}, and the problem of constructing $\mathcal{P}$ boils down to designing an appropriate subroutine \textsc{FindProbe} (Step~\ref{algo:meta:FindProbe}) that finds a probe $P$ given distinct fault sets $F_1$ and $F_2$.
Such a probe $P$ must contain some edge in $F_1 \setminus F_2$ and no edge in $F_2$, or vice versa. 
Interestingly, as we shall see in \S~\ref{sec:tomography:algo}, by making the locally optimal choice in each iteration of the meta algorithm, it is enough to guarantee that the final output $\mathcal{P}$ is a solution to Problem~\ref{prob}.

\paragraph{The subroutine.}
Algorithm~\ref{algo:FindProbe} describes the \textsc{FindProbe} soubroutine.
Given distinct fault sets $F_1$ and $F_2$, recall that the probe $P$ we are looking for must satisfy 
\begin{enumerate*}[label=(\arabic*)]
\item $P \cap (F_1 \setminus F_2) \neq \emptyset$ and $P \cap F_2 = \emptyset$, or \label{case1}
\item $P \cap (F_2 \setminus F_1) \neq \emptyset$ and $P \cap F_1 = \emptyset$.
\end{enumerate*}
Therefore, we find a $P$ with the minimal weight in each case, and finally the better of the two candidates.
Since the two cases are completely symmetric, we focus on case~\ref{case1}.
The only ingredient we need is the information on the shortest paths from monitors to each vertex in the graph.
We then opt for a version of the Floyd-Warshall algorithm (e.g., \S~25.2 in~\cite{cormen2022introduction}), which not only returns the all-pair shortest distances, but also allows the construction of the shortest paths using a predecessor array.
\begin{algorithm}
\caption{$\textsc{FindProbe}(G, M, F_1, F_2)$}
\label{algo:FindProbe}
\begin{algorithmic}[1]
\State Run Floyd-Warshall with path construction on $G_1 = (V, E \setminus F_2)$, obtain distance matrix $D_1$ and predecessor array $T_1$.
\Comment{$D_1(i,j)$ is the shortest distance from $i$ to $j$, and $T_1(i,j)$ gives the penultimate vertex on the path from $i$ to $j$, both in $G_1$.}
\label{algo:FindProbe:FW}
\State $P_1 \gets $ \Call{FindOptProbe}{$G_1, M, F_1 \setminus F_2, D_1, T_1$} \label{algo:FindProbe:P1}
\State Run Floyd–Warshall with path construction on $G_2 = (V, E \setminus F_1)$, obtain $D_2$ and $T_2$.
\State $P_2 \gets $ \Call{FindOptProbe}{$G_2, M, F_2 \setminus F_1, D_2, T_2$}
\State \Return The shorter of $P_1$ and $P_2$
\end{algorithmic}
\end{algorithm}

In yet another subroutine (Algorithm~\ref{algo:FindOptProbe}), we iteratively go through every edge $(u,v)$ in $F_1\setminus F_2$, construct the shortest probe that goes through $(u,v)$ in a graph that does not contain edges from $F_2$.
The notation $m_u \rightsquigarrow u \rightsquigarrow v \rightsquigarrow m_v$ in Step~\ref{algo:FindOptProbe:P} refers to the concatenation of the shortest path from $m_u$ to $u$, edge $(u,v)$, and the shortest path from $v$ to $m_v$.
It follows that $P_1$ (Step~\ref{algo:FindProbe:P1} in Algorithm~\ref{algo:FindProbe}) is the shortest probe satisfying conditions of case~\ref{case1}.

\begin{algorithm}
\caption{$\textsc{FindOptProbe}(G, M, F, D, T)$}
\label{algo:FindOptProbe}
\begin{algorithmic}[1]
\State Initialize $P_0 \gets \emptyset$, and set $w(P_0) = \infty$
\For{$(u,v) \in F$}
    \State $m_u \gets \argmin_{m \in M} D(m,u)$
    \State $m_v \gets \argmin_{m \in M} D(v,m)$
    \State Construct $P$ to be $m_u \rightsquigarrow u \rightsquigarrow v \rightsquigarrow m_v$ using $T$ \label{algo:FindOptProbe:P}
    \If{$w(P) < w(P_0)$}
        \State $P_0 \gets P$
    \EndIf
\EndFor
\State \Return $P_0$
\end{algorithmic}
\end{algorithm}

\paragraph{Correctness.}
Next we show that Algorithm~\ref{algo:meta} with subroutines~\ref{algo:FindProbe} and~\ref{algo:FindOptProbe} correctly outputs a solution to Problem~\ref{prob}.
We say a probe $P$ that distinguishes $F_1$ and $F_2$ is \emph{optimal}, if the length $l(P)$ is minimal among all probes that distinguish $F_1$ and $F_2$.
First we have the correctness of the \textsc{FindProbe} subroutine.

\begin{fact} \label{fact:FindProbe}
Given $G$, $M$, and distinguishable $F_1$, $F_2$, \textsc{FindProbe}$(G, M, F_1, F_2)$ (Algorithm~\ref{algo:FindProbe}) outputs an optimal $P$ that distinguishes $F_1$ and $F_2$.
\end{fact}
\begin{proof}
The fact follows from the correctness of the Floyd-Warshall algorithm and \textsc{FindOptProbe} (Algorithm~\ref{algo:FindOptProbe}) as explained above.
\end{proof}

However, not all pairs of fault sets invoke the \textsc{FindProbe} subroutine.
Changing the order in which the pairs appear in Step~\ref{algo:meta:loop_start} of the meta algorithm~\ref{algo:meta} changes the output $\mathcal{P}$.
Nonetheless, a probe with the maximal length in $\mathcal{P}$ must be generated by some pair $F_1$ and $F_2$, and such a probe is all we need to argue about.
To formalize the argument, it is convenient to introduce the following definition.

\begin{definition} \label{def:max_hard}
Distinct fault sets $F_1$ and $F_2$ in $\mathcal{F}$ are \emph{maximally hard} to distinguish given $G$ and $M$, if for any other pair of distinct fault sets $F_1'$ and $F_2'$, any $P'$ that optimally distinguishes $F_1'$ and $F_2'$ has length $l(P') \leq l(P)$, where $P$ is any probe that distinguishes $F_1$ and $F_2$.
\end{definition}
Following Definition~\ref{def:max_hard}, there may exist multiple pairs of fault sets in $\mathcal{F}$ that are maximally hard to distinguish. 
To show the correctness of the algorithm (Lemma~\ref{lem:correctness}), we only need to focus on the first pair.

\begin{lemma} \label{lem:correctness}
Let $G$ be identifiable with respect to $\mathcal{F}$ and monitors $M$.
Running Algorithm~\ref{algo:meta} on $G$, $M$ and $\mathcal{F}$ with subroutines in Algorithms~\ref{algo:FindProbe} and~\ref{algo:FindOptProbe} gives $\mathcal{P}^{*} = \arg\min_{\mathcal{P}} \max_{P \in \mathcal{P}} l(P)$, where $\mathcal{P}$ is a set of probes such that every pair of distinct fault sets in $\mathcal{F}$ are distinguishable.
\end{lemma}
\begin{proof}
Suppose Algorithm~\ref{algo:meta} first encounters a maximally-hard-to-distinguish pair of fault sets $F_1$ and $F_2$ at iteration $i$ (Step~\ref{algo:meta:loop_start}).
Denote $\mathcal{P}_i$ as the set of probes $\mathcal{P}$ at the end of iteration $i$ (Step~\ref{algo:meta:loop_end}). 
By Definition~\ref{def:max_hard} and Fact~\ref{fact:FindProbe}, $F_1$ and $F_2$ cannot be distinguished by any $P \in \mathcal{P}_i$.
Therefore, the algorithm proceeds to Step~\ref{algo:meta:FindProbe} and finds $P_0 = \textsc{FindProbe}(G, M, F_1, F_2)$, and the fact that $F_1$ and $F_2$ are maximally hard to distinguish guarantees $\max_{P \in \mathcal{P}^*} l(P) = l(P_0)$.
Then for any set of probes $\mathcal{P}$ with $\max_{P \in \mathcal{P}} l(P) < l(P_0)$, no $P \in \mathcal{P}$ distinguishes $F_1$ and $F_2$.
\end{proof}

\paragraph{Time complexity.}
We start with steps relevant to the Floyd-Warshall algorithm.
In Step~\ref{algo:FindProbe:FW} of Algorithm~\ref{algo:FindProbe}, running Floyd-Warshall on a graph with $\sizeof{V}$ vertices takes time $O(\sizeof{V}^3)$.
Step~\ref{algo:FindOptProbe:P} of Algorithm~\ref{algo:FindOptProbe} involves using the output of the Floyd-Warshall to construct two shortest paths of length at most $O(\sizeof{V})$ each, resulting in time $O(\sizeof{V})$.
Suppose $\mathcal{F}$ contains fault sets of size at most $k$, Step~\ref{algo:FindProbe:P1} of Algorithm~\ref{algo:FindProbe} takes time $O\left(\sizeof{F_1 \setminus F_2}\cdot (\sizeof{M}+\sizeof{V})\right)
= O(k\sizeof{V})$.
Algorithm~\ref{algo:FindProbe} has time complexity $O(\sizeof{V}^3+k\sizeof{V})$.
Finally, the probe construction algorithm has time complexity $O(\sizeof{\mathcal{F}}^2 (\sizeof{\mathcal{F}} + \sizeof{V}^3 + k\sizeof{V}))$.

In particular, when $k=1$, we may have only one faulty edge, and $\sizeof{\mathcal{F}} = O(\sizeof{E})$.
In this case, the time complexity of Algorithm~\ref{algo:FindProbe} is dominated by that of the Floyd-Warshall algorithm, and the probe construction algorithm takes time $O(\sizeof{E}^2 (\sizeof{V}^3 + \sizeof{E}))$.
\section{Quickest Change Detection and Localization in a Network} \label{sec:qcd}

Given a set of probes $\mathcal{P}$ with fixed parameters ($N$, $N_a$ and $n$) that can distinguish any pair of faults in $\mathcal{F}$, in this section, we consider how to use the observations from the probes to quickly localize where the fault occurs.
Throughout this section, we assume at most one edge is faulty, that is, $k=1$ and $\mathcal{F} = E \cup \emptyset$.
Relaxing this assumption warrants a separate discussion.
We further assume the transmissivity drop $\eta_d$ on the faulty edge is known.
For an unknown $\eta_d$, it suffices to combine the techniques introduced in this section with that in~\cite{hare2025keeping}.
Next, we formalize the quickest change detection and localization (QCDL) problem in \S~\ref{sec:qcd:problem} and introduce a CUSUM-based algorithm for solving QCDL in \S~\ref{sec:qcd:algo}.

\subsection{Quickest change detection and localization problem}
\label{sec:qcd:problem}

We begin by introducing notations similar to those in \S~\ref{sec:probes:review}.
Let $\mathcal{P}$ be a set of probes such that every pair of distinct faults in $\mathcal{F} = E \cup \emptyset$ is distinguishable.
$\mathcal{P}$ can be constructed by the algorithms detailed in \S~\ref{sec:tomography:algo}, or $\mathcal{P}$ can be any set of probes that guarantees identifiability.
Each probe $P \in \mathcal{P}$ gives a sequence of independent observations $\{X_{P,t}\}_{t\geq 0}$, and observations $\{X_{P',t}\}_{t\geq 0}$ from any $P'$ such that $P' \cap P = \emptyset$ are clearly independent of $\{X_{P,t}\}_{t\geq 0}$.
When $P' \cap P = e$ for some edge $e$, $\{X_{P',t}\}_{t\geq 0}$ and $\{X_{P,t}\}_{t\geq 0}$ are independent conditioned on the state of $e$.
We assume all observations are collected at a central site without delay.

Suppose an edge $e^* \in \mathcal{F}$ suffers a transmission drop at an unknown deterministic time $\nu \geq 1$.
Let $f_{0,P}$ and $f_{1,P}^{(e^*)}$ denote the pre- and post-change distributions for probe $P$, respectively.
We note that, if $e^* \notin P$, that is, the probe $P$ does not traverse $e^*$, we have $f_{1,P}^{(e^*)} = f_{0,P}$.
Then for all $P \in \mathcal{P}$, $X_{P,t} \sim f_{0,P}$ for $t < \nu$, and 
$X_{P,t} \sim f_{1,P}^{(e^*)}$ for $t \geq \nu$.
Let ${\bm X}_t$ be a random vector representing all observations at time $t$, ${\bm X}_t = (X_{P,t})_{P\in \mathcal{P}}$.
Then ${\bm X}_t$ is drawn from pre-change joint distribution ${\bm f}_0 = \prod_{P\in \mathcal{P}} f_{0,P}$ for $t < \nu$, and from the post-change joint distribution ${\bm f}_{1}^{(e^*)} = \prod_{P\in \mathcal{P}} f^{(e^*)}_{1,P}$ for $t \geq \nu$.
Denote $\mathbb{E}_0$ and $\mathbb{E}_\nu^{(e^*)}$ as the expectation corresponding to the joint pre-change distribution ${\bm f}_0$ and post-change distribution ${\bm f}_{1}^{(e^*)}$, respectively.
When the change point $\nu = \infty$, we have $\mathbb{E}_\nu^{(e^*)} = \mathbb{E}_0$, which we also denote by $\mathbb{E}_\infty$.

If there is a faulty edge $e^*$, a fault detection and localization algorithm should declare that an edge $\lambda$ is faulty at time $\tau$, on seeing all observations $\{{\bm X}_t\}_{0 \leq t \leq \tau}$ up to time $\tau$.
Ideally, we want the localization to be accurate, that is, $\lambda = e^*$, and the detection delay $\tau - \nu$ to be positive and small.
$\tau$ is considered a \emph{false alarm}, if $\tau < \nu$.
In the QCD literature, $\tau$ is commonly referred to as the \emph{stopping time}.
Due to the trade-off between false alarm rates and detection speed, the performance guarantees of QCD algorithms are often established in the asymptotic regime, as the worst-case average run length goes to infinity.

Along the same lines, we define our false alarm and localization requirement.
Let $\mathcal{C}_{\gamma}$ be the subfamily of stopping time and localization pairs for which the localization is accurate, and the WARL to false alarm is at least $\gamma$, 
\begin{align}\label{eq:WARL-Localization}
    \mathcal{C}_\gamma = \{(\tau, \lambda) \mid \text{ as } \gamma \rightarrow \infty: \mathbb{E}_\infty[\tau] \geq \gamma \text{ and } \lambda \rightarrow e^* \text{ when } \nu < \infty \}.
\end{align}
By Pollak's criterion~\cite{pollak1985optimal}, for any $(\tau, \lambda) \in \mathcal{C}_\gamma$, the worst-case detection delay is defined as $\text{WADD}(\tau) = \sup\limits_{\nu \geq 1} \mathbb{E}_{\nu}^{(e^*)}[\tau - \nu | \tau \geq \nu]$.
First we state the following universal lower bound on $\text{WADD}(\tau)$ for any $(\tau, \lambda)$ that fulfills the false alarm and localization requirement in~\eqref{eq:WARL-Localization}. 
\begin{theorem}[Detection delay lower bound]\label{thm:universal}
Let $e^* \in E$ be the faulty edge.
As $\gamma \rightarrow \infty$, we have
\begin{align*}
    \inf\limits_{(\tau, \lambda) \in \mathcal{C}_\gamma} \text{WADD}(\tau) \geq \frac{\log \gamma}{\sum_{P \in \mathcal{P}_{e^*}}D\left(f_{1,P}^{(e^*)}\big|\big|f_{0,P}\right)}(1+o(1)).
\end{align*}
\end{theorem}
\noindent
The proof of Theorem~\ref{thm:universal} is deferred to Appendix~\ref{sec:proofs-qcd:lb}.
Next in \S~\ref{sec:qcd:algo}, we are interested in solving the following problem, which we refer to as the quickest change detection and localization (QCDL) problem:
\begin{problem}[QCDL]
Find a stopping rule $(\tau^*, \lambda^*) \in \mathcal{C}_\gamma$ that minimizes $\text{WADD}(\tau)$ as $\gamma \to \infty$. 
\end{problem}

\subsection{Detection and localization algorithm} \label{sec:qcd:algo}

Recall that we assume the parameters $N$, $N_a$, $n$ and $\eta_d$ are all given, and the pre-change transmissivity of each probe $P \in \mathcal{P}$ can be computed from the length $l(P)$ (\S~\ref{sec:tomography:formulation}).
This means that, for a set of probes $\mathcal{P}$, we can effectively compute $f_{0,P}$ and $f_{1,P}^{(e)}$ for every edge $e \in E$ and every probe $P \in \mathcal{P}$.
Consequently, we have the joint distributions ${\bm f}_{0}$ and ${\bm f}_{1}^{(e)}$ for every $e \in E$.
However, without knowing which edge $e^*$ is faulty, or whether a faulty $e^*$ exists, all we can say about the joint post-change distribution ${\bm f}_1^{(e^*)}$ is that ${\bm f}_1^{(e^*)}$ belongs to a known collection $\{{\bm f}_1^{(e)}\}_{e\in E} \cup \{{\bm f}_0\}$, without being able to pinpoint which specific distribution in the collection is ${\bm f}_1^{(e^*)}$.

\paragraph{Direct application of generalized likelihood ratio (GLR) test.}
In this case, a natural approach is to apply a GLR-based algorithm~\cite{L1998}.
At each time $t$, the algorithm maintains the following statistic
\begin{equation} \label{eq:GLR}
\max\limits_{\substack{1\leq j \leq t, \\ e \in E}} \sum_{i=j}^t \log \left(\frac{{\bm f}_1^{(e)}({\bm X}_i)}{{\bm f}_0({\bm X}_i)}\right), 
\end{equation}
and outputs $\tau^{\text{GLR}}$ and $\lambda^{\text{GLR}}$ as estimates for the change point $\nu$ and the faulty edge $e^*$ respectively, where 
\begin{align*}
\tau^{\text{GLR}} & = \inf\left\{t \geq 1: \max\limits_{\substack{1\leq j \leq t, \\ e \in E}} \sum_{i=j}^t \log \left(\frac{{\bm f}_1^{(e)}({\bm X}_i)}{{\bm f}_0({\bm X}_i)}\right) \geq h\right\}, \\
\lambda^{\text{GLR}} & = \arg\max\limits_{\substack{1\leq j \leq t, \\ e \in E}} \sum_{i=j}^{\tau^{\text{GLR}}} \log \left(\frac{{\bm f}_1^{(e)}({\bm X}_i)}{{\bm f}_0({\bm X}_i)}\right).
\end{align*}
Unfortunately, maintaining (\ref{eq:GLR}) directly is inefficient, both in terms of memory consumption and computational complexity, as (\ref{eq:GLR}) has to be recomputed from the ground up using ${\bm X}_1, \dots, {\bm X}_t$ at each timestep $t$.

\paragraph{Fault-Localization CUSUM (\FLCuSum).} 
We now introduce our \FLCuSum\, algorithm (Algorithm~\ref{algo:fl-CUSUM}).
While (\ref{eq:GLR}) cannot be computed efficiently, part of (\ref{eq:GLR}), $\max\limits_{1\leq j \leq t} \sum_{i=j}^t \log \left(\frac{{\bm f}_1^{(e)}({\bm X}_i)}{{\bm f}_0({\bm X}_i)}\right)$, can in fact be updated recursively.
Therefore, instead of maximizing over all possible post-change distributions, we maintain in parallel a separate test statistic $\text{CUSUM}_e$ for each edge $e \in E$,
\begin{equation}
    \text{CUSUM}_e[t] = \max\limits_{1\leq j \leq t}\sum_{i=j}^t \log \left(\frac{{\bm f}_1^{(e)}({\bm X}_i)}{{\bm f}_0({\bm X}_i)}\right),
\end{equation}
which we update recursively using
\begin{align*}
    \text{CUSUM}_e[t] = \max\left\{0,  \text{CUSUM}_e[t-1] +  \log \left(\frac{{\bm f}_1^{(e)}({\bm X}_t)}{{\bm f}_0({\bm X}_t)}\right) \right\}, 
\end{align*}
and $\text{CUSUM}_e[0] = 0$.
\FLCuSum\, halts as soon as an edge-specific statistics $\text{CUSUM}_{\lambda}$, for some $\lambda \in E$, crosses a pre-specified threshold $h$ at time $\tau$.
Then \FLCuSum\, declares that a transmission drop has occurred on edge $\lambda$ at time $\tau$, where we have
\begin{align}
    &\tau = \min\limits_{e \in E} \left\{ \inf\{ t \geq 1 : \text{CUSUM}_e[t] \geq h \} \right\}, \label{eq:flcusum-stopping-time}\\
    &\lambda = \arg\max\limits_{e \in E} \text{CUSUM}_e[\tau]. \nonumber
\end{align}

\begin{algorithm}[t]
\caption{Fault-Localization CUSUM (\FLCuSum) Algorithm} 
\label{algo:fl-CUSUM}
\textbf{Input}: Observations $\{{\bm X}_t\}_{t \geq 0}$, threshold $h$, joint distributions ${\bm f}_0$ and ${\bm f}_1^{(e)}$ for all $e \in E$;\\
\textbf{Output}: Change point estimate $\tau$, and faulty edge estimate $\lambda$.
\begin{algorithmic}[1]
\State $t \gets 0$, $\text{CUSUM}_{e}[t] \gets 0, \forall e \in E$
\While{1}
    \State $t \gets t + 1$
    \For{$e \in E$}
        \State $\text{CUSUM}_e[t] \gets \max\left\{0,  \text{CUSUM}_e[t-1] +  \log \left(\frac{{\bm f}_1^{(e)}({\bm X}_t)}{{\bm f}_0({\bm X}_t)}\right) \right\}$ 
    \EndFor
    \If{$\exists e \in E$ s.t. $\text{CUSUM}_{e}[t] \geq h$} \label{algo:fl-CUSUM:threshold}
        \State $\tau \gets t$, $\lambda \gets \arg\max_{e \in E} \text{CUSUM}_{e}[t]$
        \State Break
    \EndIf
\EndWhile
\State \Return $(\tau, \lambda)$
\end{algorithmic}
\end{algorithm}

\paragraph{Theoretical guarantees.}
To show the correctness of \FLCuSum, we verify that, by setting the threshold $h = \log((|E|+1)\gamma)$, the output of \FLCuSum, $(\tau, \lambda)$, satisfies the false alarm and localization requirement~\eqref{eq:WARL-Localization}.
In what follows, let $e^*$ denote the edge that suffers a transmission drop from time $\nu < \infty$.
First, we lower bound the average run length to false alarm by leveraging~\cite[Lemma 3]{mei2005information}, 
which states that, under the pre-change distribution, the probability that the CUSUM statistic $\text{CUSUM}_e > h$ is small for any $e \in E$, thereby ensuring a long average run length to false alarm.
\begin{lemma}[False alarm requirement]\label{thm:SR1-WARL}
$\mathbb{E}_\infty[\tau] \geq \gamma$.
\end{lemma}
\noindent
Although we are not yet able to formally verify the localization requirement in~\eqref{eq:WARL-Localization}, we believe that $\lambda$ indeed converges to $e^*$ using the same threshold $h$ in Lemma~\ref{thm:SR1-WARL}, as supported by the simulation results in \S~\ref{sec:simulation}, which we leave as a conjecture.
\begin{conjecture}[Localization accuracy]\label{thm:SR1-localization}
As $\gamma \rightarrow \infty$, $\lambda \rightarrow e^*$. 
\end{conjecture}

Finally, we derive an upper bound on the worst-case average detection delay.
Since for any $e \in E$, \FLCuSum's test statistics $\text{CUSUM}_e[t]$ is always non-negative, and $\text{CUSUM}_e[0] = 0$, the worst-case average detection delay happens when the change point $\nu = 1$.
By upper bounding $\mathbb{E}_1^{(e^*)}[\tau]$ using a generalized Weak Law of Large Numbers, we obtain the following guarantee.
\begin{theorem}[Detection delay]\label{thm:SR1-WADD}
As $\gamma \rightarrow \infty$, we have
    \begin{align*}
        \text{WADD}(\tau) \leq \frac{\log((|E|+1) \gamma)}{\sum_{P \in \mathcal{P}_{e^*}}D\left(f_{1,P}^{(e^*)}\big|\big|f_{0,P}\right)}(1+o(1)).
    \end{align*}
\end{theorem}
\noindent
We refer the reader to Appendices~\ref{sec:proof-WARL} and~\ref{sec:proof-WADD} for the proofs of Lemma~\ref{thm:SR1-WARL} and Theorem~\ref{thm:SR1-WADD}. 
In terms of the worst-case detection delay, the gap between the upper bound (Theorem~\ref{thm:SR1-WADD}) and the universal lower bound (Theorem~\ref{thm:universal}) is only an additive term of $\frac{\log(|E|+1)}{\sum_{P \in \mathcal{P}_{e^*}}D\left(f_{1,P}^{(e^*)}\big|\big|f_{0,P}\right)}(1+o(1))$.
It remains open whether this additive gap could be closed through a more refined analysis, thus establishing the asymptotic optimality of \FLCuSum.

\paragraph{Estimating the quantum speedup in reducing fault localization delay.}
Theorems~\ref{thm:universal} and~\ref{thm:SR1-WADD} suggest that, when we go from a channel to a network, rather than using the ratio $q_n$ (Lemma~\ref{lem:q_n}), we may estimate the quantum speedup in reducing fault localization delay using the ratio 
\begin{equation} \label{eqn:s_n}
s_n(e^*) = \frac{\sum_{P \in \mathcal{P}_{e^*}} D\left(\mathcal{N}_{n,P, e^*}^{(1)}||\mathcal{N}_{n,P}^{(0)}\right)}{n \cdot \sum_{P \in \mathcal{P}_{e^*}} D\left(\mathcal{N}_{P, e^*}^{(1)}||\mathcal{N}_P^{(0)}\right)},
\end{equation}
where the subscripts $P$ and $e^*$ are added to the notation of the Gaussian distributions in both the quantum~\eqref{eqn:quantum_Gaussian} and the classical~\eqref{eqn:classical_Gaussian} cases, to reflect the dependence of the transmissivity on specific choices of $P$ and $e^*$.
Although Theorem~\ref{thm:SR1-WADD} only provides an upper bound on the worst-case detection delay as the threshold $h$ approaches infinity, as we shall see in \S~\ref{sec:simulation}, $s_n(e^*)$ actually predicts the speedup factor quite accurately for finite and practical values of $h$.
\section{Simulation based evaluation} \label{sec:simulation}

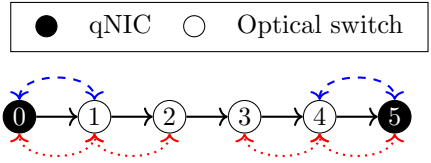
\begin{figure}
\centering
\resizebox{0.35\textwidth}{!}{%
\begin{tikzpicture}

\foreach \i in {0, 5} {
    \node[draw, circle, fill=black, inner sep=1.5pt] at (\i, 0) (v\i) {\textcolor{white}{\i}};
}
\foreach \i in {1, 2, 3, 4} {
    \node[draw, circle, fill=white, inner sep=1.5pt] at (\i, 0) (v\i) {\i};
}

\draw[line width=0.8pt, ->] (v0) -- (v1);  
\draw[line width=0.8pt, ->] (v1) -- (v2);  
\draw[line width=0.8pt, ->] (v2) -- (v3);  
\draw[line width=0.8pt, ->] (v3) -- (v4);  
\draw[line width=0.8pt, ->] (v4) -- (v5);  

\draw[<->, blue, thick, bend left=30, out=90, in=90, dashed] (v0) to (v1);  
\draw[<->, blue, thick, bend left=30, out=90, in=90, dashed] (v4) to (v5);

\draw[<->, red, thick, bend right=30, out=90, in=90, dotted] (v1) to (v0);
\draw[<->, red, thick, bend right=30, out=90, in=90, dotted] (v2) to (v1);
\draw[<->, red, thick, bend right=30, out=90, in=90, dotted] (v5) to (v4);
\draw[<->, red, thick, bend right=30, out=90, in=90, dotted] (v4) to (v3);

\node[draw, rectangle, fill=white, above right=20pt and -8pt of v0] (legend) {
    \begin{tabular}{llll}
        \raisebox{-1.5pt}{\tikz\draw[fill=black, draw=black] (0,0) circle(4pt);} & qNIC &
        \raisebox{-1.5pt}{\tikz\draw[fill=white, draw=black] (0,0) circle(4pt);} & Optical switch \\
    \end{tabular}
};
\end{tikzpicture}
}
\caption{5 probes to localize faults on a 5-link line network topology: $\textcolor{blue}{P_1}: 0\rightsquigarrow 1\rightsquigarrow 0$, $\textcolor{red}{P_2}: 0\rightsquigarrow 1\rightsquigarrow 2 \rightsquigarrow 1 \rightsquigarrow 0$, $\textcolor{blue}{P_3}: 5\rightsquigarrow 4\rightsquigarrow 5$, $\textcolor{red}{P_4}: 5\rightsquigarrow 4\rightsquigarrow 3\rightsquigarrow 4\rightsquigarrow 5$, and $P_5: 0\rightsquigarrow 1\rightsquigarrow 2\rightsquigarrow 3\rightsquigarrow 4\rightsquigarrow 5$.} 
\label{fig:line}
\end{figure}

\begin{figure*}[h]
\centering
\resizebox{\textwidth}{!}{%
\begin{tikzpicture}[
    tier0/.style={circle, fill=black, draw, inner sep=1.5pt, minimum size=15.2pt},
    tier1/.style={circle, fill=white, draw, inner sep=1.5pt},
    tier2/.style={circle, fill=white, draw, inner sep=1.5pt},
    tier3/.style={circle, fill=white, draw, inner sep=1.5pt},
    font=\small
]

\foreach \i in {0,...,15} {
  \node[tier0] (n\i) at (\i*1.1, 0) {\textcolor{white}{\i}};
}

\foreach \i [evaluate=\i as \x using (\i-16)*2.2+1] in {16,...,23} {
  \node[tier1] (n\i) at (\x, 1) {\i};
}

\foreach \i [evaluate=\i as \x using (\i-24)*2.2+1] in {24,...,31} {
  \node[tier2] (n\i) at (\x, 2) {\i};
}

\foreach \i [evaluate=\i as \x using (\i-32)*3.3+3] in {32,...,35} {
  \node[tier3] (n\i) at (\x, 3.2) {\i};
}

\foreach \i in {0,...,15} {
  \pgfmathtruncatemacro{\j}{16 + int(\i / 2)}
  \draw[line width=0.8pt, gray] (n\i) -- (n\j);
}

\foreach \i in {0,...,3} {
  \pgfmathtruncatemacro{\a}{16 + 2*\i}
  \pgfmathtruncatemacro{\b}{\a + 1}
  \pgfmathtruncatemacro{\sx}{24 + 2*\i}
  \pgfmathtruncatemacro{\sy}{\sx + 1}
  \foreach \agg in {\a, \b} {
    \foreach \spine in {\sx, \sy} {
      \draw[line width=0.8pt, gray] (n\agg) -- (n\spine);
    }
  }
}

\foreach \spine in {24,26,28,30} {
  \foreach \core in {32,33} {
    \draw[line width=0.8pt, gray] (n\spine) -- (n\core);
  }
}
\foreach \spine in {25,27,29,31} {
  \foreach \core in {34,35} {
    \draw[line width=0.8pt, gray] (n\spine) -- (n\core);
  }
}

\end{tikzpicture}
}
\caption{A $48$-link 3-tier fat-tree optical switched data center topology~\cite{al2008scalable}, where sending 48 loop-back probes originating and terminating at the qNIC layer (monitors) suffices to localize any single faulty link. For instance, we use length-$6$ loop-back probes:
$2i \leadsto (i+16) \leadsto (i+24) \leadsto 32$ and back,
as well as $(2i+1) \leadsto (i+16) \leadsto (i+25) \leadsto 34$ and back, for $i = 0,2,4,6$,
plus $2i \leadsto (i+16) \leadsto (i+23) \leadsto 33$ and back, 
as well as $(2i+1) \leadsto (i+16) \leadsto (i+24) \leadsto 35$ and back, for $i = 1,3,5,7$.
We also use the $32$ length-2 and length-4 probes nested in the above length-$6$ probes.
}
\label{fig:datacenter}
\end{figure*}
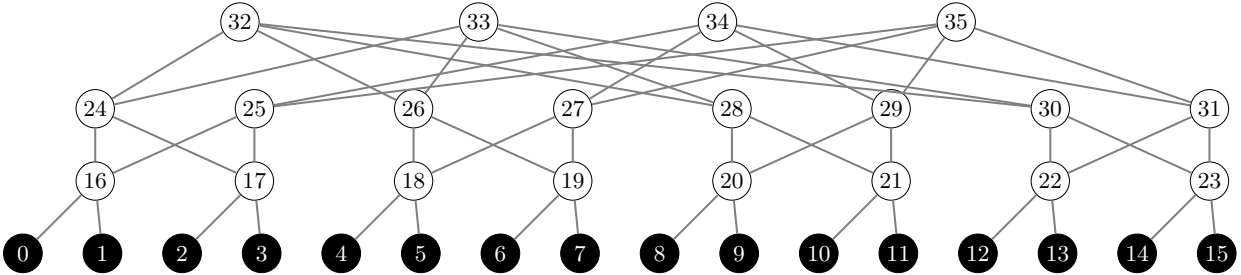

\begin{figure}
    \centering
    \begin{subfigure}[b]{0.49\textwidth}
        \centering
        \includegraphics[width=\linewidth]{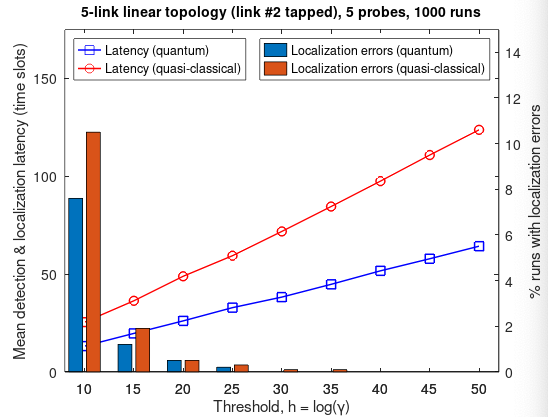}
        \caption{In a $5$-link line network topology of Figure~\ref{fig:line}, link $(1,2)$ suffers a transmission drop.}
        \label{fig:eval_line}
    \end{subfigure}
    \hfill
    \begin{subfigure}[b]{0.49\textwidth}
        \centering
        \includegraphics[width=\linewidth]{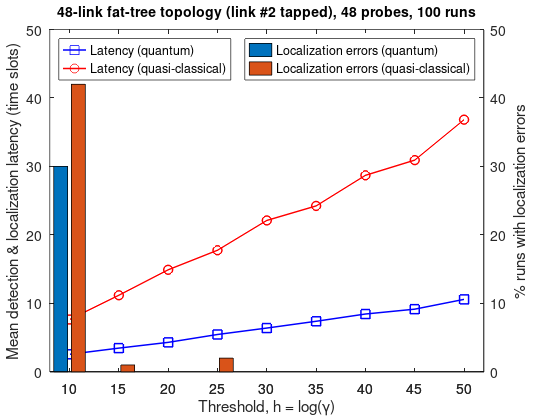}
        \caption{In a 3-tier fat-tree network topology of Figure \ref{fig:datacenter}, link $(1,16)$ suffers a transmission drop.}
        \label{fig:eval_datacenter}
    \end{subfigure}
    \caption{Detection/localization latency vs. localization errors (for $x$-labels, we have $h = \log((\sizeof{E}+1)\gamma)$).}
\end{figure}

We evaluate the performance of our network-level CUSUM algorithm first on the 5-link line network topology illustrated in Figure \ref{fig:line}. It is easy to see that 5 probes (one such scheme shown in Figure \ref{fig:line}) are necessary and sufficient to localize a single fault in this network. We perform an idealized simulation of this network where the transmitters of each probe use 6 dB of squeezing, each link is assumed to have pre-change transmissivity $\eta=0.9$ and a tap reduces the transmissivity of a probe by a factor of $\eta_{d}^m=0.95^m$, if the said probe traverses a tapped link $m\in \{1,2\}$ times.

The recipients of each probe share the homodyne measurements with a central controller which is responsible for network management. The controller runs \FLCuSum\, Algorithm~\ref{algo:fl-CUSUM}. Specifically, it tracks the CUSUM statistics from probe sets $\mathcal{P}_{(0,1)} = \{P_1,P_2,P_5\}$, $\mathcal{P}_{(1,2)} = \{P_2,P_5\}$, $\mathcal{P}_{(2,3)} = \{P_5\}$, $\mathcal{P}_{(3,4)} = \{P_4,P_5\}$, and $\mathcal{P}_{(4,5)} = \{P_3,P_4,P_5\}$ concurrently for localizing links $(0,1), (1,2), (2,3), (3,4)$ and $(4,5)$, respectively. Figure~\ref{fig:eval_line} shows the results of CUSUM from a scenario where link $(1,2)$ is faulty. We vary the cutoff threshold $h = \log((\sizeof{E}+1)\gamma)$ and show two metrics on the same plot: (1) mean latency (computed over 1000 runs) to detect a change and then localize it to a link (left-Y axis); and (2) probability of detection/localization error (right-Y axis). We record an error if we either declare any link other than link $(1,2)$ is faulty, or detect a change on link $(1,2)$ \textit{before} the ground truth change point (in our case it was $\nu=1000$). We observe high localization errors ($7-10\%$) for both quasi-classical and quantum (squeezed) probes at low values of $h$ (around 10), where the tolerance to false alarms is high. In this case, for the runs in which CUSUM successfully detects and localizes link $(1,2)$, the mean detection latency is low for both sets of probes. As the false alarm threshold is increased, the localization error drops drastically with the quantum probes slightly outperforming the quasi-classical ones. In contrast, mean  detection latency only increases linearly with $h$ for both probe sets as predicted by theory. This is natural since the CUSUM statistic needs more time to exceed a larger value of $h$. However, we observe that the slope of the quantum probes is about $2\times$ lower than that of the quasi-classical ones , thus pointing to a quantum speedup of $\approx 2\times$. An ideal scenario for the quantum probes is $h \sim 30$ where mean detection latency is low while the localization error is 0.

To show the relevance of our localization algorithm in a real world application, we illustrate an optical switched version of a 3-tier fat-tree network~\cite{al2008scalable} (see Figure \ref{fig:datacenter}) where the optical switches are MEMS devices made of movable tiny mirrors that allow one to control the light path between a pair of quantum network interfaces (qNIC) at any time instant. The MEMS assemblies can be reconfigured in tens of milliseconds, hence it is relatively easy to periodically reconfigure the light paths to align them with the intended probes. It is not hard to see that 48 loop-back probes provide exact \emph{identifiability} of all links in the network. We opt for the loop-back probes since a tapped link will see multiple probes traversing it twice, hence the difference between the post-change and pre-change distributions is more significant in this case than if a probe had traversed a tapped link only once.

The results of running concurrent CUSUMs is shown in Figure \ref{fig:eval_datacenter}. We observe that for $h=10$, both the quantum and classical probing schemes have over 30\%
localization errors. However errors drop drastically as the $h$ increases. For $h=50$, the squeezed probes detect the faulty link about 3.5 times faster than quasi-classical probes.
In both examples, our quantum speedup estimator $s_n(e^*)$ in~\eqref{eqn:s_n} matches the empirical speedup factors for finite threshold $h$.

Note that we do not have to send every single measurement from a probe to the central controller. 
To keep up with line rate, one may choose to sample the measurements.
A heavily sampled version of the measurements also contains the essential distributional information about various CUSUM statistics. Therefore one can trade off accuracy, latency, and network overhead. Studying such tradeoffs is part of our ongoing research.
\section{Related Work} \label{sec:related}

Our scheme builds on three components, the quantum probes (\S~\ref{sec:probes}), the probe construction algorithm (\S~\ref{sec:tomography}), and the \FLCuSum\, algorithm (\S~\ref{sec:qcd}).
The quantum probes are adopted from~\cite{guha2025quantum}.
In \S~\ref{sec:probes}, we have thoroughly discussed our theoretical results in relation to~\cite{guha2025quantum}.
In this section, we focus on previous works relevant to our path construction and \FLCuSum\, algorithms.

In Boolean network tomography, existing works have looked into the probe construction problem under different routing mechanisms.
In this work, we assume a probe may be any walk between monitors, which corresponds to the case of arbitrarily controllable routing (ACR)~\cite{cho2014localizing}.
Previous works have formulated the minimum-cost probe construction problem as an integer linear program (ILP)~\cite{ahuja2011srlg,stanic2010active,he2021network}.
While the ILP itself is general enough to accommodate different objectives (e.g., minimizing the total or average length of the probes, minimizing the total number of probes), the number of decision variables and the number of constraints can both be exponential in the size of the network, which makes solving the ILP intractable for practical networks.
Several works~\cite{ahuja2011srlg,stanic2010active} have proposed algorithms that compute feasible solutions of the ILP, but with no guarantee on how well the solutions approximate the optimum.
Though our algorithm shares the same structure as~\cite{ahuja2011srlg,stanic2010active}, our specific subroutines~\ref{algo:FindProbe} and~\ref{algo:FindOptProbe} allow us to solve Problem~\ref{prob} exactly.

Minimizing the number of probes has attracted broader interests, particularly in the context of non-adaptive group testing.
When all vertices in the graph are monitors, we enter the case of graph-constrained group testing~\cite{spang2018unconstraining,cheraghchi2012graph,harvey2007non}.
If we further assume the graph is complete, all claims in unconstrained group testing apply, and $\Theta(k^2 \log_k(\sizeof{E}))$ probes suffice to identify $k$ edge failures~\cite{d2014bounds}.
When probes are constrained to be walks on a graph, a construction via random walks~\cite{cheraghchi2012graph} shows that at most $O(c^4Tk^2\log{\frac{\sizeof{E}}{k}})$ probes are sufficient, for graphs with mixing time $T$ and some constant $c$, such that the degree $\deg(v)$ of each vertex $v \in V$ satisfies $6c^2 k T^2 \leq \deg(v) \leq 6c^3 k T^2$.
However, for a wide collection of graphs, $O(k^2\log{\frac{\sizeof{E}}{k}})$ probes suffice~\cite{spang2018unconstraining}.

When only a subset of the vertices are monitors, minimizing the number of probes is hard in the worst case~\cite[Corollary 6.21]{he2021network}.
We are not yet able to bound the number of probes the path construction algorithm (\S~\ref{sec:tomography:algo}) generates given a general graph $G$.
However, when $G$ is a path graph with $n$ vertices, and a monitor is placed at each endpoint, assuming there is at most one faulty edge in $G$, it is not hard to check that the algorithm outputs $n-1$ probes.
This matches the lower bound on the number of probes where at least half of the vertices are monitors, up to a factor of $2$~\cite[Theorem 1]{harvey2007non}.

In the QCD literature, the most relevant work to ours is~\cite{rovatsos2017statistical}, which addresses the problem of detecting and identifying line outages in a power network. While our detection procedure is similar to that of~\cite{rovatsos2017statistical} — both approaches maintain a CUSUM-based statistic for each possible attack or outage scenario and declare a change when any statistic exceeds a pre-specified threshold — our localization (or identification) procedure is notably different. Specifically, \cite{rovatsos2017statistical} produces a ranked list of candidate outaged lines upon detection, relying on subsequent inspection to determine the actual outage. 
The localization procedure in our scheme is more efficient. 
With a set of probes that ensure every faulty link is identifiable, our scheme directly finds the faulty link as the one associated with the first test statistic to cross the threshold.
\section{Conclusion} \label{sec:conclusion}

In this paper, we extend several previous results on squeezed-augmented probes to entanglement-augmented probes, and demonstrate an instance of quantum speedup within the context of networking.
Specifically, we propose a scheme that harnesses the power of such quantum probes for localizing transmission drops in optical networks, using quantum hardware that is readily available.
While we are able to construct a set of probes that can distinguish fault sets of arbitrary sizes, our quickest change localization algorithm currently limits the effectiveness of our scheme to a simplified scenario in which at most a single link may be faulty in a network.
However, it is our hope that, our approach---the combination of quantum probes, network tomography and the theory of QCD---could potentially lead to other instances of quantum speedup in optical network monitoring and beyond.
We leave that for future work.

\section*{Acknowledgments}

This research was supported by the DARPA Quantum Augmented Networking (QuANET) Program under contract number HR001124C0405 and by the DEVCOM Army Research Laboratory under Cooperative Agreement W911NF-17-2-0196, through the University of Illinois at Urbana-Champaign. The views, opinions and/or findings expressed are those of the authors and should not be interpreted as representing the official views or policies of the Department of Defense or the U.S. Government.

\newpage
\bibliographystyle{plain}
\bibliography{reference}

\begin{thebibliography}{10}

\bibitem{ahuja2011srlg}
Satyajeet~S Ahuja, Srinivasan Ramasubramanian, and Marwan Krunz.
\newblock Srlg failure localization in optical networks.
\newblock {\em IEEE/ACM Transactions on Networking}, 19(4):989--999, 2011.

\bibitem{al2008scalable}
Mohammad Al-Fares, Alexander Loukissas, and Amin Vahdat.
\newblock A scalable, commodity data center network architecture.
\newblock {\em ACM SIGCOMM computer communication review}, 38(4):63--74, 2008.

\bibitem{chan2010optical}
Calvin~CK Chan.
\newblock {\em Optical performance monitoring: advanced techniques for next-generation photonic networks}.
\newblock Academic Press, 2010.

\bibitem{cheraghchi2012graph}
Mahdi Cheraghchi, Amin Karbasi, Soheil Mohajer, and Venkatesh Saligrama.
\newblock Graph-constrained group testing.
\newblock {\em IEEE Transactions on Information Theory}, 58(1):248--262, 2012.

\bibitem{cho2014localizing}
Sangman Cho and Srinivasan Ramasubramanian.
\newblock Localizing link failures in all-optical networks using monitoring tours.
\newblock {\em Computer Networks}, 58:2--12, 2014.

\bibitem{cormen2022introduction}
Thomas~H Cormen, Charles~E Leiserson, Ronald~L Rivest, and Clifford Stein.
\newblock {\em Introduction to algorithms}.
\newblock MIT press, 2022.

\bibitem{cui2023superadditive}
Chaohan Cui, Jack Postlewaite, Babak~N Saif, Linran Fan, and Saikat Guha.
\newblock Superadditive communication with the green machine: A practical demonstration of nonlocality without entanglement.
\newblock {\em arXiv preprint arXiv:2310.05889}, 2023.

\bibitem{duffield2003simple}
Nick Duffield.
\newblock Simple network performance tomography.
\newblock In {\em Proceedings of the 3rd ACM SIGCOMM conference on Internet measurement}, pages 210--215, 2003.

\bibitem{d2014bounds}
Arkadii~G D’yachkov, Il'ya~Viktorovich Vorob’ev, NA~Polyansky, and V~Yu Shchukin.
\newblock Bounds on the rate of disjunctive codes.
\newblock {\em Problems of Information Transmission}, 50:27--56, 2014.

\bibitem{fellouris2017multichannel}
Georgios Fellouris and Alexander~G Tartakovsky.
\newblock Multichannel sequential detection—part i: Non-iid data.
\newblock {\em IEEE Transactions on Information Theory}, 63(7):4551--4571, 2017.

\bibitem{gong2020secure}
Yupeng Gong, Rupesh Kumar, Adrian Wonfor, Shengjun Ren, Richard~V Penty, and Ian~H White.
\newblock Secure optical communication using a quantum alarm.
\newblock {\em Light: Science \& Applications}, 9(1):170, 2020.

\bibitem{guha2025quantum}
Saikat Guha, Tiju~Cherian John, and Prithwish Basu.
\newblock Quantum-enhanced quickest change detection of transmission loss.
\newblock {\em arXiv preprint arXiv:2503.12276}, 2025.

\bibitem{hare2025keeping}
James~Zachary Hare, Lance Kaplan, Venugopal~V Veeravalli, and Don Towsley.
\newblock Keeping the best: The k-best rule for efficient quickest change detection with unknown post-change distribution.
\newblock In {\em ICASSP 2025-2025 IEEE International Conference on Acoustics, Speech and Signal Processing (ICASSP)}, pages 1--5. IEEE, 2025.

\bibitem{harvey2007non}
Nicholas~JA Harvey, Mihai Patrascu, Yonggang Wen, Sergey Yekhanin, and Vincent~WS Chan.
\newblock Non-adaptive fault diagnosis for all-optical networks via combinatorial group testing on graphs.
\newblock In {\em IEEE INFOCOM 2007-26th IEEE International Conference on Computer Communications}, pages 697--705. IEEE, 2007.

\bibitem{he2021network}
Ting He, Liang Ma, Ananthram Swami, and Don Towsley.
\newblock {\em Network tomography: identifiability, measurement design, and network state inference}.
\newblock Cambridge University Press, 2021.

\bibitem{Karlsson:23}
Stefan Karlsson, Mikael Andersson, Rui Lin, Lena Wosinska, and Paolo Monti.
\newblock Detection of abnormal activities on a sm or mm fiber.
\newblock In {\em Optical Fiber Communication Conference (OFC) 2023}, page M3Z.6. Optica Publishing Group, 2023.

\bibitem{korsbäck2023growing}
Fredrik Korsbäck, Lincoln Dale, and Dave McGaugh.
\newblock Growing aws internet peering with 400 gbe, 2023.
\newblock \url{https://aws.amazon.com/blogs/networking-and-content-delivery/growing-aws-internet-peering-with-400-gbe/}.

\bibitem{L1998}
Tze~Leung Lai.
\newblock Information bounds and quick detection of parameter changes in stochastic systems.
\newblock {\em IEEE Transactions on Information theory}, 44(7):2917--2929, 1998.

\bibitem{lattimore2020bandit}
Tor Lattimore and Csaba Szepesv{\'a}ri.
\newblock {\em Bandit algorithms}.
\newblock Cambridge University Press, 2020.

\bibitem{L1971}
Gary Lorden.
\newblock Procedures for reacting to a change in distribution.
\newblock {\em The annals of mathematical statistics}, pages 1897--1908, 1971.

\bibitem{ma2015optimal}
Liang Ma, Ting He, Ananthram Swami, Don Towsley, and Kin~K Leung.
\newblock On optimal monitor placement for localizing node failures via network tomography.
\newblock {\em Performance Evaluation}, 91:16--37, 2015.

\bibitem{ma2014node}
Liang Ma, Ting He, Ananthram Swami, Don Towsley, Kin~K Leung, and Jessica Lowe.
\newblock Node failure localization via network tomography.
\newblock In {\em Proceedings of the 2014 Conference on Internet Measurement Conference}, pages 195--208, 2014.

\bibitem{medard1997security}
Muriel Medard, Douglas Marquis, Richard~A Barry, and Steven~G Finn.
\newblock Security issues in all-optical networks.
\newblock {\em IEEE network}, 11(3):42--48, 1997.

\bibitem{mei2005information}
Yajun Mei.
\newblock Information bounds and quickest change detection in decentralized decision systems.
\newblock {\em IEEE Transactions on Information theory}, 51(7):2669--2681, 2005.

\bibitem{page1954continuous}
Ewan~S Page.
\newblock Continuous inspection schemes.
\newblock {\em Biometrika}, 41(1/2):100--115, 1954.

\bibitem{pollak1985optimal}
Moshe Pollak.
\newblock Optimal detection of a change in distribution.
\newblock {\em The Annals of Statistics}, pages 206--227, 1985.

\bibitem{poutievski2022jupiter}
Leon Poutievski, Omid Mashayekhi, Joon Ong, Arjun Singh, Mukarram Tariq, Rui Wang, Jianan Zhang, Virginia Beauregard, Patrick Conner, Steve Gribble, et~al.
\newblock Jupiter evolving: transforming google's datacenter network via optical circuit switches and software-defined networking.
\newblock In {\em Proceedings of the ACM SIGCOMM 2022 Conference}, pages 66--85, 2022.

\bibitem{rovatsos2017statistical}
Georgios Rovatsos, Xichen Jiang, Alejandro~D Dom{\'\i}nguez-Garc{\'\i}a, and Venugopal~V Veeravalli.
\newblock Statistical power system line outage detection under transient dynamics.
\newblock {\em IEEE Transactions on Signal Processing}, 65(11):2787--2797, 2017.

\bibitem{shim2012correlation}
Hyun-Kyu Shim, Keun-Yeong Cho, Yu~Takushima, and Yun-Chur Chung.
\newblock Correlation-based otdr for in-service monitoring of 64-split tdm pon.
\newblock {\em Optics express}, 20(5):4921--4926, 2012.

\bibitem{skorin2016physical}
Nina Skorin-Kapov, Marija Furdek, Szilard Zsigmond, and Lena Wosinska.
\newblock Physical-layer security in evolving optical networks.
\newblock {\em IEEE Communications Magazine}, 54(8):110--117, 2016.

\bibitem{spang2018unconstraining}
Bruce Spang and Mary Wootters.
\newblock Unconstraining graph-constrained group testing.
\newblock {\em arXiv preprint arXiv:1809.03589}, 2018.

\bibitem{stanic2010active}
Sava Stanic, Suresh Subramaniam, Gokhan Sahin, Hongsik Choi, and Hyeong-Ah Choi.
\newblock Active monitoring and alarm management for fault localization in transparent all-optical networks.
\newblock {\em IEEE Transactions on Network and Service Management}, 7(2):118--131, 2010.

\bibitem{velush2023boosting}
Lukas Velush.
\newblock Boosting our connectivity with our own next-generation optical network, 2023.
\newblock \url{https://www.microsoft.com/insidetrack/blog/boosting-our-connectivity-with-our-own-next-generation-optical-network/}.

\bibitem{zheng2025quantum}
Yufei Zheng, Yu-Zhen~Janice Chen, Prithwish Basu, and Don Towsley.
\newblock A quantum speedup in localizing transmission loss change in optical networks, 2025.

\bibitem{zou2018quickest}
Shaofeng Zou, Georgios Fellouris, and Venugopal~V Veeravalli.
\newblock Quickest change detection under transient dynamics: Theory and asymptotic analysis.
\newblock {\em IEEE Transactions on Information Theory}, 65(3):1397--1412, 2018.

\bibitem{zou2019quickest}
Shaofeng Zou, Venugopal~V Veeravalli, Jian Li, and Don Towsley.
\newblock Quickest detection of dynamic events in networks.
\newblock {\em IEEE Transactions on Information Theory}, 66(4):2280--2295, 2019.

\end{thebibliography}

\newpage
\appendix
\section{The monotonicity and asymptotics of $q_n$ (Proof of Lemma~\ref{lem:q_n_monotonicity})}

For all claims on the monotonicity, we bound the corresponding partial derivatives.
\begin{enumerate}[label={(\arabic*)}]
\item 
Fix any $n$, $\eta_{d}$, $N$ and $N_a$.
Since $\frac{4Nn}{1-c_n \eta}$ increases in $\eta$, if we are able to show that
$f(\eta) 
= \frac{c_n (1-\eta_{d})}{1- c_n \eta}
-\frac{1}{\eta} \ln{\frac{1 - c_n\eta\eta_{d}}{1 - c_n\eta }}$
increases in $\eta$, 
we can conclude that $q_n$ increases in $\eta$.
We therefore look at $f'(\eta)$,
\begin{equation} \label{eqn:f(eta)_derivative}
f'(\eta)
= \frac{c_n^2(1-\eta_{d})}{(1-c_n\eta)^2}
- \frac{c_n(1-\eta_{d})}{\eta (1-c_n\eta)(1-c_n\eta \eta_{d})}
+ \frac{1}{\eta^2} \ln{\frac{1-c_n\eta \eta_{d}}{1-c_n\eta}}.
\end{equation}
Using $\ln{(y+1)} \geq \frac{y}{y+1}$ for all $y > -1$, we lower bound the third term in (\ref{eqn:f(eta)_derivative}),
\begin{equation} \label{lb:3rd_f(eta)_derivative}
\frac{1}{\eta^2} \ln{\frac{1-c_n\eta \eta_{d}}{1-c_n\eta}}
= \frac{1}{\eta^2} \ln{\left( 1+\frac{c_n\eta(1-\eta_{d})}{1-c_n\eta} \right)}
\geq \frac{c_n (1-\eta_{d})}{\eta (1-c_n\eta \eta_{d})}.
\end{equation}
By (\ref{eqn:f(eta)_derivative}), (\ref{lb:3rd_f(eta)_derivative}), and $c_n \in (0,1)$ for any $n$, $f'(\eta) \geq \frac{c_n^3 \eta (1-\eta_{d})}{(1-c_n \eta)^2 (1-c_n \eta \eta_{d})} > 0$.

When $\eta \to 0$, only the second term in (\ref{eqn:q_n}) (or rather, the second term in $f(n)$) requires applying L'Hôpital's rule,
\[
\lim_{\eta \to 0} -\frac{1}{\eta} \ln{\frac{1 - c_n\eta\eta_{d}}{1 - c_n\eta }}
= - \lim_{\eta \to 0} \frac{c_n (1-\eta_{d})}{(1-c_n\eta)(1-c_n\eta \eta_{d})}
= -c_n (1-\eta_{d}).
\]
Then the first and the second terms in $q_n$ magically cancel out at the limit, which gives (\ref{eqn:lim_eta_to_0}).

\item Fix any $n$, $\eta$, $N$ and $N_a$, and let
\[
g(\eta_{d}) 
= \frac{c_n}{1- c_n \eta} \cdot \frac{1-\eta_{d}}{(1-\sqrt{\eta_{d}})^2}
-\frac{1}{\eta} \cdot \frac{\ln{\frac{1 - c_n\eta\eta_{d}}{1 - c_n\eta }}}{(1-\sqrt{\eta_{d}})^2}.
\]
It suffices to show $g(\eta_{d})$ increases in $\eta_{d}$.
We again look at the derivative $g'(\eta_{d})$,
\begin{equation} \label{eqn:g(eta_tap)_derivative}
g'(\eta_{d})
= \frac{c_n (1- c_n \eta \sqrt{\eta_{d}}) (1-\eta_{d})-(1-c_n \eta)(1-c_n \eta \eta_{d})\ln{\frac{1-c_n \eta \eta_{d}}{1-c_n \eta}}}{\sqrt{\eta_{d}}(1-c_n\eta)(1-\sqrt{\eta_{d}})^{3}(1-c_n \eta \eta_{d})}.
\end{equation}
As the denominator of (\ref{eqn:g(eta_tap)_derivative}) is already positive, we focus on the numerator.
An appropriately chosen upper bound for $\ln{\frac{1-c_n \eta \eta_{d}}{1-c_n \eta}}$ allows us to lower bound the numerator by a function positive on $(0,1)$.
Here we use $\ln{(y+1)} \leq \frac{y}{2} \cdot \frac{y+2}{y+1}$, and get that
\begin{equation} \label{ineq:ub_ln}
0 < 
\ln{\frac{1-c_n \eta \eta_{d}}{1-c_n \eta}}
\leq \frac{c_n \eta (1-\eta_{d}) (2-c_n \eta(1+\eta_{d}))}{2(1-c_n \eta)(1-c_n \eta \eta_{d})}.
\end{equation}
Applying (\ref{ineq:ub_ln}) to the numerator of $g'(\eta_{d})$ (\ref{eqn:g(eta_tap)_derivative}), we have 
\begin{align*}
c_n (1- c_n \eta \sqrt{\eta_{d}}) (1-\eta_{d})
- & (1-c_n \eta)(1-c_n \eta \eta_{d})\ln{\frac{1-c_n \eta \eta_{d}}{1-c_n \eta}} \\
& \geq \frac{1}{2} c_n (1-\eta_{d}) (1 + c_n\eta (\sqrt{\eta_{d}} -1)^2) > 0.
\end{align*}

Next we show the limit is of the form (\ref{eqn:lim_eta_tap_to_0}).
By applying L'Hôpital's rule twice,
\begin{equation} \label{eqn:limit_g_eta_tap}
\lim_{\eta_{d} \to 1} g(\eta_{d}) 
= \frac{2c_n^2 \eta}{(1-c_n \eta)^2}.
\end{equation}
(\ref{eqn:lim_eta_tap_to_0}) follows from dressing (\ref{eqn:limit_g_eta_tap}) with constant factors in (\ref{eqn:q_n}).

\item Fix any $n$, $\eta$, $\eta_{d}$ and $N_a$, denote $c_1 = \frac{c_n}{1- c_n \eta} \cdot \frac{1-\eta_{d}}{(1-\sqrt{\eta_{d}})^2}
-\frac{1}{\eta} \cdot \frac{\ln{\frac{1 - c_n\eta\eta_{d}}{1 - c_n\eta }}}{(1-\sqrt{\eta_{d}})^2}$, $c_2 = \frac{4n}{1-c_n \eta}$, we have $q_n = \frac{1}{4n\left(N+N_{a}\right)}\left(c_{1}+c_{2}N\right)$, and $\frac{\partial}{\partial N} q_n = \frac{c_{2}a-c_{1}}{\left(N+N_{a}\right)^{2}}$.
Therefore, $q_n$ increases in $N$ when $c_{2}a-c_{1}>0$.
Again by (\ref{lb:3rd_f(eta)_derivative}), we can lower bound $c_{2}a-c_{1}$, and it is easy to check that, $N > \frac{c_n^2 t (1+\sqrt{\eta_{d}})^2}{4nN_a (1-c_n \eta \eta_{d})}$ makes the lower bound of $c_{2}a-c_{1}$ positive.

The limit (\ref{eqn:lim_qn_N}) follows trivially from the fact that $c_1, c_2$ are constants.

\item 
Fix any $\eta$, $\eta_{d}$, $N$ and $N_a < \frac{N\eta(1-\sqrt{\eta_{d}})}{(1-\eta)(1+\sqrt{\eta_{d}})}$.
Consider the derivative $\frac{\partial}{\partial n} q_n = \frac{1}{4(N+N_a)n^2}\sum_{i=0}^2 h_i(x)$, where
\begin{align*}
& h_0(n) = \frac{\sqrt{N_{a}n}\left(4\eta Nn-b\left(1+2N_{a}n+2\left(1-2\eta\right)\sqrt{N_{a}n+1}\sqrt{N_{a}n}\right)\right)}{\sqrt{N_{a}n+1}\left(\sqrt{N_{a}n+1}+\sqrt{N_{a}n}-2\eta\sqrt{N_{a}n}\right)^{2}}, \\
& h_1(n) = -\frac{b\sqrt{N_{a}n}}{\sqrt{N_{a}n+1}\left(\sqrt{N_{a}n+1}+\sqrt{N_{a}n}-2\eta\sqrt{N_{a}n}\right)\left(\sqrt{N_{a}n+1}+\sqrt{N_{a}n}-2\eta \eta_{d}\sqrt{N_{a}n}\right)}, \\
& h_2(n) = \frac{1}{\eta \left(1-\sqrt{\eta_{d}}\right)^{2}}\ln\frac{1-c_{n}\eta\eta_{d}}{1-c_{n}\eta}.
\end{align*}
Our goal is to show that, for $n$ that is not too small, both $h_0(n) > 0$ and $h_1(n) + h_2(n) > 0$.
While this is convenient and sufficient, it is not necessary.
Consequently, (\ref{eqn:n_0}) is not tight.
However, since $n$ only takes discrete values, we shall see that (\ref{eqn:n_0}) is already quite good for practical choices of parameters.

As in (\ref{lb:3rd_f(eta)_derivative}), we lower bound $h_2(n)$, and then $h_1(n)+h_2(n)$,
\[
h_2(n)
> \frac{2b\sqrt{N_{a}n}}{\sqrt{N_{a}n+1}+\sqrt{N_{a}n}-2\eta \eta_{d} \sqrt{N_{a}n}},
\]
\[
\resizebox{.87\textwidth}{!}{$
\begin{aligned}
h_1(n)+h_2(n)
& > \frac{b\sqrt{N_{a}n}\left(4\sqrt{N_{a}n+1}\sqrt{N_{a}n}\left(1-\eta\right)-1\right)}{\sqrt{N_{a}n+1}\left(\sqrt{N_{a}n+1}+\sqrt{N_{a}n}-2\eta\sqrt{N_{a}n}\right)\left(\sqrt{N_{a}n+1}+\sqrt{N_{a}n}-2\eta\eta_{d}\sqrt{N_{a}n}\right)} \\
& > \frac{b\sqrt{N_{a}n}\left(4N_a n\left(1-\eta\right)-1\right)}{\sqrt{N_{a}n+1}\left(\sqrt{N_{a}n+1}+\sqrt{N_{a}n}-2\eta\sqrt{N_{a}n}\right)\left(\sqrt{N_{a}n+1}+\sqrt{N_{a}n}-2\eta\eta_{d}\sqrt{N_{a}n}\right)}.
\end{aligned}
$}
\]
Therefore, $h_1(n)+h_2(n)>0$ if $4N_a n\left(1-\eta\right)-1 > 0$, which gives part of (\ref{eqn:n_0}).

To lower bound $h_0(n)$, we peel off positive factors in $h_0(n)$, and only consider 
\[
l_0(n) = 4\eta Nn-b\left(1+2N_{a}n+2\left(1-2\eta\right)\sqrt{N_{a}n+1}\sqrt{N_{a}n}\right).
\]
Depending on the choice of $\eta$, the lower bound of $l_0(n)$ slightly differs. 
For $\eta \geq \frac{1}{2}$,
\begin{align}
l_0(n) 
& > 4 N\eta n-b\left(1+2N_{a}n-2\left(2\eta-1\right)N_a n\right) \nonumber \\
& = 4(N\eta - b N_a(1-\eta))n - b. \label{ineq:lb:l0_eta_geq_1/2}
\end{align}
For $\eta < \frac{1}{2}$,
\begin{align}
l_0(n) 
& > 4 N\eta n-b\left(1+2N_{a}n+2\left(1-2\eta\right)(N_a n +1)\right) \nonumber \\
& = 4(N\eta - b N_a(1-\eta))n - b(3-4\eta). \label{ineq:lb:l0_eta_leq_1/2}
\end{align}
Given $N_a < \frac{N\eta(1-\sqrt{\eta_{d}})}{(1-\eta)(1+\sqrt{\eta_{d}})} = \frac{N\eta}{b(1-\eta)}$, $N\eta - b N_a(1-\eta)>0$, setting (\ref{ineq:lb:l0_eta_geq_1/2}) and (\ref{ineq:lb:l0_eta_leq_1/2}) to be positive gives the rest of (\ref{eqn:n_0}).

The limit (\ref{eqn:lim_n_q_n}) immediately follows from $\lim_{n \to \infty} c_n = 1$.

\item Fix any $n$, $\eta$, $\eta_{d}$ and $N$:
    \begin{enumerate}[label=(5.\arabic*)]
    \item
    For $n=1$, it is observed that the third term in $\frac{1}{n}D\left(\mathcal{N}_n^{(1)}||\mathcal{N}_n^{(0)}\right)$ (\ref{eqn:KL_quantum_n_final}) dominates if $N$ is large (Appendix A.3 in~\cite{guha2025quantum}).
    In finding an approximate range of $N_a$ where squeezing augmentation outperforms the classical benchmark, \cite{guha2025quantum} sets $q_1 \approx \frac{N}{(N+N_a)(1-c_1 \eta)} \approx \frac{N}{(N+N_a)(1-\eta)} >1$.
    Along similar lines but with rigor, here we show that for any $n$, the sum of the first two terms in $q_n$ (Lemma~\ref{lem:q_n}) is strictly positive:
    \begin{align*}
    & \frac{c_n}{1- c_n \eta} \cdot \frac{1-\eta_{d}}{(1-\sqrt{\eta_{d}})^2}
    -\frac{1}{\eta} \cdot \frac{\ln{\frac{1 - c_n\eta\eta_{d}}{1 - c_n\eta }}}{(1-\sqrt{\eta_{d}})^2} \\
    \geq & \frac{c_n}{1- c_n \eta} \cdot \frac{1-\eta_{d}}{(1-\sqrt{\eta_{d}})^2}
    -\frac{1}{\eta(1-\sqrt{\eta_{d}})^2} \cdot \frac{c_n \eta (1-\eta_{d}) (2-c_n \eta(1+\eta_{d}))}{2(1-c_n \eta)(1-c_n \eta \eta_{d})}
    && (\text{by (\ref{ineq:ub_ln})}) \\
    = & \frac{c_n^2\eta(1-\eta_{d})^2}{2(1-\sqrt{\eta_{d}})^2(1- c_n \eta)(1-c_n \eta \eta_{d})} > 0.
    \end{align*}
    This allows us to lower bound $q_n$ by its third term,
    \begin{equation} \label{ineq:lb_qn}
    q_n
    > \frac{N}{(N+N_a)(1-c_n \eta)}.
    \end{equation}
    For any $\varepsilon>0$, $N_a > \varepsilon$ implies that $c_n > c_{n,\varepsilon}$.
    Given $N_a < \frac{c_{n,\varepsilon}N\eta}{1-c_{n,\varepsilon}\eta}$, by (\ref{ineq:lb_qn}),
    \[
    q_n 
    > \frac{N}{(N+\frac{c_{n,\varepsilon}N\eta}{1-c_{n,\varepsilon}\eta})(1-c_{n,\varepsilon} \eta)}
    =1.
    \]
    
    \item 
    Once again, we consider the derivative $\frac{\partial}{\partial N_a} q_n = \frac{1}{4(N+N_a)^2}\sum_{i=0}^2 f_i(N_a)$, where
    \begin{align}
    & f_0(N_a) = 
    \frac{b_d\left(N+2N_{a}\left(2\eta-1\right)\sqrt{N_{a}n+1}\sqrt{N_{a}n}-2nN_{a}^{2}-N_{a}\right)}{\sqrt{N_{a}n+1}\sqrt{N_{a}n}\left(\sqrt{N_{a}n+1}+\sqrt{N_{a}n}-2\eta\sqrt{N_{a}n}\right)^{2}}, \label{eqn:f0(N_a)}\\
    & \resizebox{.84\textwidth}{!}{$
    \begin{aligned}
    f_1(N_a) = 
    & \frac{1}{\left(1-\sqrt{\eta_{d}}\right)^{2}n} \left( \frac{1}{\eta}\ln\left(\frac{1-c_{n}\eta \eta_{d}}{1-c_{n}\eta}\right) \right. \\
    &\qquad \left. -\frac{\left(1-\eta_{d}\right)n\left(N+N_{a}\right)}{\sqrt{N_{a}n+1}\sqrt{N_{a}n}\left(\sqrt{N_{a}n+1}+\sqrt{N_{a}n}-2\eta\sqrt{N_{a}n}\right)\left(\sqrt{N_{a}n+1}+\sqrt{N_{a}n}-2\eta\eta_{d}\sqrt{N_{a}n}\right)} \right),
    \end{aligned}
    $}
    \label{eqn:f1(N_a)} \\
    & \resizebox{.84\textwidth}{!}{$
    \begin{aligned}
    f_2(N_a) = 
    & -\frac{4N\left(\sqrt{N_{a}n+1}+\sqrt{N_{a}n}\right)}{\sqrt{N_{a}n+1}\sqrt{N_{a}n}\left(\sqrt{N_{a}n+1}+\sqrt{N_{a}n}-2\eta\sqrt{N_{a}n}\right)^{2}}\left( \sqrt{N_{a}n}+Nn\eta\left(\sqrt{N_{a}n}-\sqrt{N_{a}n+1}\right) \right. \\
    & \qquad \qquad \qquad \qquad \qquad \qquad \qquad \left. + N_{a}n\left(\sqrt{N_{a}n+1}+\sqrt{N_{a}n}+\eta\left(\sqrt{N_{a}n}-3\sqrt{N_{a}n+1}\right)\right) \right)
    .
    \end{aligned}
    $} \label{eqn:f2(N_a)}
    \end{align}
    To derive the limit, we look at the limit of each $f_i(N_a)$,
    \begin{align*}
    & \lim_{N_a \to 0} f_0(N_a)
    = b_dN \cdot \lim_{N_a \to 0} \frac{1}{\sqrt{N_a n}} 
    = -\lim_{N_a \to 0} f_1(N_a), \\
    & \lim_{N_a \to 0} f_2(N_a)
    = 4N^2 n \eta \cdot \lim_{N_a \to 0} \frac{1}{\sqrt{N_a n}}.
    \end{align*}
    Then the claim follows from $\lim_{N_a \to 0} \frac{\partial}{\partial N_a} q_n = \frac{1}{4(N+N_a)^2}\sum_{i=0}^2 \lim_{N_a \to 0} f_i(N_a)$.
    
    \item
    To show that $\frac{\partial}{\partial N_a} q_n < 0$ when $N_a$ is not too small, we upper bound each of the $f_i(N_a)$, $i=0,1,2$.
    As complicated as it seems, bounding $f_0(N_a)$ (\ref{eqn:f0(N_a)}) and $f_2(N_a)$ (\ref{eqn:f2(N_a)}) only requires applying $\sqrt{N_{a}n+1}\sqrt{N_{a}n} < N_{a}n+1$, and we get
    \begin{equation} \label{ineq:f0(N_a)_ub}
    f_0(N_a) < 
    \frac{b_d\left(-4n\left(1-\eta\right)N_{a}^{2}+\left(4\eta-1\right)N_{a}+N\right)}{\sqrt{N_{a}n+1}\sqrt{N_{a}n}\left(\sqrt{N_{a}n+1}+\sqrt{N_{a}n}-2\eta\sqrt{N_{a}n}\right)^{2}},
    \end{equation}
    \begin{equation} \label{ineq:f2(N_a)_ub}
    f_2(N_a) < 
    \frac{4N\left(-4n^{2}\left(1-\eta\right)a^{2}-\left(2-6\eta\right)na+Nn\eta\right)}{\sqrt{N_{a}n+1}\sqrt{N_{a}n}\left(\sqrt{N_{a}n+1}+\sqrt{N_{a}n}-2\eta\sqrt{N_{a}n}\right)^{2}}.
    \end{equation}
    For $f_1(N_a)$ (\ref{eqn:f1(N_a)}), we first upper bound the logarithmic term using (\ref{ineq:ub_ln}), 
    \begin{equation} \label{ineq:f1(N_a)_ub_temp}
    \resizebox{.87\textwidth}{!}{$
    \begin{aligned}
    f_1(N_a) < 
    \frac{2b_d\left(2-\eta\left(1+\eta_{d}\right)\right)nN_{a}^{2}+4N_{a}}{\sqrt{N_{a}n+1}\sqrt{N_{a}n}\left(\sqrt{N_{a}n+1}+\sqrt{N_{a}n}-2\eta\sqrt{N_{a}n}\right)\left(\sqrt{N_{a}n+1}+\sqrt{N_{a}n}-2\eta\eta_{d}\sqrt{N_{a}n}\right)}.
    \end{aligned}
    $}
    \end{equation}
    Noticing that the upper bound in (\ref{ineq:f1(N_a)_ub_temp}) has a positive numerator, we wiggle its denominator, so that all upper bounds of $f_i(N_a)$ have the same denominator,
    \begin{equation} \label{ineq:f1(N_a)_ub}
    f_1(N_a) < 
    \frac{2b_d\left(2-\eta\left(1+\eta_{d}\right)\right)nN_{a}^{2}+4N_{a}}{\sqrt{N_{a}n+1}\sqrt{N_{a}n}\left(\sqrt{N_{a}n+1}+\sqrt{N_{a}n}-2\eta\sqrt{N_{a}n}\right)^{2}}.
    \end{equation}
    By setting the sum of (\ref{ineq:f0(N_a)_ub}), (\ref{ineq:f2(N_a)_ub}) and (\ref{ineq:f1(N_a)_ub}) to be at most $0$, we obtain the upper bound on $N_a$ in (\ref{ineq:ub_Na}), when $8Nn\left(1-\eta\right)>\eta b_d\left(1-\eta_{d}\right)$.
    \end{enumerate}
\end{enumerate}
\section{Proofs Related to Quickest Change Detection and Localization}
\label{sec:proofs-qcd}

In this appendix, we fill in all the missing proofs from \S~\ref{sec:qcd}. 
Throughout this appendix, we assume edge $e^*$ becomes faulty from some timestep $\nu < \infty$.

\subsection{Universal detection delay lower bound (proof of Theorem~\ref{thm:universal})}
\label{sec:proofs-qcd:lb}

First recall the following generalized version of the Weak Law of Large Numbers.
\begin{lemma}[Lemma A.1 in~\cite{fellouris2017multichannel}]\label{lemma:WLLN}
    Let $\{Y_t, t\in \mathbb{N}\}$ be a sequence of random variables i.i.d. on $(\Omega, \mathcal{F}, \mathbb{P})$ with $\mathbb{E}[Y_t] = \mu > 0$, then for any $\epsilon > 0$, as $n\rightarrow \infty$, 
    \begin{align}
        \mathbb{P}\left[\frac{\max_{1\leq k \leq n} \sum_{t=1}^k Y_t}{n} - \mu > \epsilon\right] \rightarrow 0.
    \end{align}
\end{lemma}

In what follows, we lower bound the worst-case average detection delay. 
Note that
\begin{align}
    \text{WADD}(\tau) &:= \sup_{\nu \geq 1} \mathbb{E}_{\nu}^{(e^*)}[\tau -\nu|\tau \geq \nu]\notag\\
    &\geq \mathbb{E}_{\nu}^{(e^*)}[\tau - \nu|\tau \geq \nu]\notag\\
    &\overset{(a)}{\geq} \mathbb{P}_{\nu}^{(e^*)}[\tau - \nu \geq \alpha_\gamma|\tau \geq \nu] \times \alpha_\gamma, \label{eq:lower-begin}
\end{align}
where inequality (a) follows from the Markov inequality. It then suffices to show that as $\gamma \rightarrow \infty$,
\begin{align}
    \mathbb{P}_{\nu}^{(e^*)}[\tau - \nu \geq \alpha_\gamma|\tau \geq \nu] \rightarrow 1,
\end{align}
or equivalently, 
\begin{align}\label{eq:time-larger-prob}
    \mathbb{P}_{\nu}^{(e^*)}[ \nu \leq \tau < \nu + \alpha_\gamma|\tau \geq \nu] \rightarrow 0. 
\end{align}
    
We will first show that~\eqref{eq:time-larger-prob} holds when 
\begin{align}
    \alpha_\gamma = \frac{1}{D\left({\bm f}_{1}^{(e^*)}\big|\big|{\bm f}_{0}\right)+\epsilon}\log \gamma^{(1-\epsilon)}, \quad \epsilon > 0,
\end{align}
using a change-of-measure argument. 
Let $H_t := ({\bm X}_1,..., {\bm X}_t), t \in \mathbb{N}_+$. fFr some $a$ to be specified later, we have
\begin{align}
    \mathbb{P}_{0}[\nu \leq \tau < \nu + \alpha_\gamma] &= \mathbb{E}_{0}[\mathds{1}_{\{\nu \leq \tau < \nu +\alpha_\gamma\}}]\notag\\
    &\overset{(a)}{=} \mathbb{E}_{\nu}^{(e^*)}\left[\mathds{1}_{\{\nu \leq \tau < \nu +\alpha_\gamma\}}\frac{\mathbb{P}_0(H_\tau)}{\mathbb{P}_{\nu}^{(e^*)}(H_\tau)}\right]\notag\\
    &\geq \mathbb{E}_{\nu}^{(e^*)}\left[\mathds{1}_{\left\{\nu \leq \tau < \nu +\alpha_\gamma, \log \left(\frac{\mathbb{P}_0(H_\tau)}{\mathbb{P}_{\nu}^{(e^*)}(H_\tau)}\right)\geq -a\right\}}\frac{\mathbb{P}_0(H_\tau)}{\mathbb{P}_{\nu}^{(e^*)}(H_\tau)}\right]\notag\\
    &\geq e^{-a} \mathbb{P}_{\nu}^{(e^*)}\left[\nu \leq \tau < \nu +\alpha_\gamma, \log \left(\frac{\mathbb{P}_0(H_\tau)}{\mathbb{P}_{\nu}^{(e^*)}(H_\tau)}\right)\geq -a\right]\notag\\
    &= e^{-a} \mathbb{P}_{\nu}^{(e^*)}\left[\nu \leq \tau < \nu +\alpha_\gamma, \log \left(\frac{\mathbb{P}_{\nu}^{(e^*)}(H_\tau)}{\mathbb{P}_{0}(H_\tau)}\right)\leq a\right]\notag\\
    &\geq e^{-a} \mathbb{P}_{\nu}^{(e^*)}\left[\nu \leq \tau < \nu +\alpha_\gamma, \max\limits_{\nu \leq j < \nu +\alpha_\gamma}\log \left(\frac{\mathbb{P}_{\nu}^{(e^*)}(H_j)}{\mathbb{P}_{0}(H_j)}\right)\leq a\right]\notag\\
    & \overset{(b)}{\geq } e^{-a} \mathbb{P}_{\nu}^{(e^*)}[\nu \leq \tau < \nu +\alpha_\gamma] - e^{-a} \mathbb{P}_{\nu}^{(e^*)}\left[ \max\limits_{\nu \leq j < \nu +\alpha_\gamma}\log \left(\frac{\mathbb{P}_{\nu}^{(e^*)}(H_j)}{\mathbb{P}_{0}(H_j)}\right)> a\right], \label{eq:after-change-of-measure}
\end{align}
where the change-of-measure argument (a) holds since $\mathbb{P}_0$ and $\mathbb{P}_{\nu}^{(e^*)}$ are measures over a common measurable space, $\mathbb{P}_{\nu}^{(e^*)}$ is $\sigma$-finite, and $\mathbb{P}_0 \ll \mathbb{P}_{\nu}^{(e^*)}$; 
and inequality (b) is due to the fact that, for any event $A$ and $B$, $\mathbb{P}[A \cap B] \geq \mathbb{P}[A] - \mathbb{P}[B^c]$.

The event $\{\tau \geq \nu\}$ only depends on $H_{\nu-1}$, which follows the same distribution under $\mathbb{P}_0$ and $\mathbb{P}_{\nu}^{(e^*)}$. This implies 
\begin{align}
    \mathbb{P}_0[\tau \geq \nu] = \mathbb{P}_{\nu}^{(e^*)}[\tau \geq \nu]. \label{eq:same-prob-before-change}
\end{align}
By~\eqref{eq:after-change-of-measure} and reordering~\eqref{eq:same-prob-before-change}, it follows that 
\begin{equation} \label{eq:after-reordering}
\resizebox{.94\textwidth}{!}{$
\begin{aligned} 
\mathbb{P}_{\nu}^{(e^*)}[\nu \leq \tau < \nu + \alpha_\gamma | \tau \geq \nu] \leq e^{a} \mathbb{P}_0[\nu \leq \tau < \nu + \alpha_\gamma | \tau \geq \nu] + \mathbb{P}_{\nu}^{(e^*)}\left[\max\limits_{\nu \leq j < \nu + \alpha_\gamma}\log\left(\frac{\mathbb{P}_{\nu}^{(e^*)}(H_j)}{\mathbb{P}_0(H_j)}\right)> a\Big|\tau \geq \nu\right].
\end{aligned}
$}
\end{equation}

To show that the first term on the right-hand side of~\eqref{eq:after-reordering} converges to $0$ as $\gamma \rightarrow \infty$, we can use the proof-by-contradiction argument as in the proof of~\cite[Theorem 1]{L1998}.
Let $\alpha_\gamma$ be a positive integer and $\alpha_\gamma < \gamma$. For any $(\tau, \lambda) \in \mathcal{C}_\gamma$, we have $\mathbb{E}_0[\tau] \geq \gamma$.  Then for some $\nu \geq 1$, $\mathbb{P}_0[\tau \geq \nu] >0$, and
\begin{align}
    \mathbb{P}_0[\tau < \nu + \alpha_\gamma | \tau \geq \nu] \leq \frac{\alpha_\gamma}{\gamma}.
\end{align}
As otherwise, $\mathbb{P}_0[\tau \geq \nu + \alpha_\gamma] < 1 - \alpha_\gamma / \gamma$ for all $\nu \geq 1$ with $\mathbb{P}_0[\tau \geq \nu] >0$, which implies  that $\mathbb{E}_0[\tau] < \gamma$, contradicting the fact that $(\tau, \lambda) \in \mathcal{C}_\gamma$.
Let $a = \log \gamma^{1-\epsilon}$, then
\begin{align}
    e^a \mathbb{P}_0[\nu \leq \tau < \nu + \alpha_\gamma| \tau \geq \nu] \leq e^a \frac{\alpha_\gamma}{\gamma} = \frac{\alpha_\gamma}{\gamma^\epsilon} \rightarrow 0, \text{ as } \gamma \rightarrow \infty. 
\end{align}
    
We then show that the second term on the right-hand side of~\eqref{eq:after-reordering} also converges to $0$ as $\gamma \rightarrow \infty$.
\begin{align}
    &\mathbb{P}_{\nu}^{(e^*)}\left[\max\limits_{\nu\leq j < \nu + \alpha_\gamma}\log\left(\frac{\mathbb{P}_{\nu}^{(e^*)}(H_j)}{\mathbb{P}_0(H_j)}\right) > a \Big | \tau \geq \nu \right] \notag\\
    &= \mathbb{P}_{\nu}^{(e^*)}\left[\max\limits_{\nu\leq j < \nu + \alpha_\gamma}\sum_{i=\nu}^j\log\left(\frac{\mathbb{P}_{\nu}^{(e^*)}({\bm X}_i)}{\mathbb{P}_0({\bm X}_i)}\right) > a \Big | \tau \geq \nu \right]\notag\\
    & \overset{(a)}{=} \mathbb{P}_{\nu}^{(e^*)}\left[\max\limits_{\nu\leq j < \nu + \alpha_\gamma}\sum_{i=\nu}^j\log\left(\frac{\mathbb{P}_{\nu}^{(e^*)}({\bm X}_i)}{\mathbb{P}_0({\bm X}_i)}\right) > a\right]\notag\\ 
    &\overset{(b)}{\leq} \mathbb{P}_{\nu}^{(e^*)}\left[\max\limits_{\nu\leq j < \nu + \alpha_\gamma}\sum_{i=\nu}^j\log\left(\frac{\mathbb{P}_{\nu}^{(e^*)}({\bm X}_i)}{\mathbb{P}_0({\bm X}_i)}\right) > \alpha_\gamma \left(D\left({\bm f}_{1}^{(e^*)}\big|\big|{\bm f}_{0}\right)+\epsilon\right)\right] \rightarrow 0, \text{ as } \gamma \rightarrow \infty, \label{eq:lower-end}
    \end{align}
where equality (a) is due to the fact that the event $\{\tau \geq \nu\}$ is independent from ${\bm X}_i, \forall i \geq \nu$; inequality (b) is because $a \geq \alpha_\gamma \left(D\left({\bm f}_{1}^{(e^*)}\big|\big|{\bm f}_{0}\right)+\epsilon\right)$; and the last step is due to Lemma~\ref{lemma:WLLN}. 

By~\eqref{eq:lower-begin}-\eqref{eq:lower-end},
\begin{align}
    \text{WADD}(\tau) 
    \geq \frac{\log \gamma}{D\left({\bm f}_{1}^{(e^*)}\big|\big|{\bm f}_{0}\right) +o(1)}(1-o(1))
    & \overset{(a)}{=} \frac{\log \gamma}{\sum_{p \in \mathcal{P}_{e^*}}D\left(f_{1,p}^{(e^*)}\big|\big|f_{0,p}\right)+o(1)}(1-o(1)) \\
    & = \frac{\log \gamma}{\sum_{P \in \mathcal{P}_{e^*}}D\left(f_{1,P}^{(e^*)}\big|\big|f_{0,P}\right)}(1+o(1)),
\end{align}
where (a) follows from the divergence decomposition lemma~\cite[Lemma 15.1]{lattimore2020bandit}. This completes the proof.

\subsection{Run length to false alarm lower bound (proof of Lemma~\ref{thm:SR1-WARL})}\label{sec:proof-WARL}
    
We lower bound the average run length to false alarm of \FLCuSum\, as follows:
\begin{align*}
\mathbb{E}_\infty[\tau] 
&= \sum_{t=0}^\infty \mathbb{P}_\infty[\tau \geq t]\\
&=\sum_{t=0}^\infty \prod_{e \in E}\mathbb{P}_\infty[\tau^{\text{CUSUM}_e} \geq t]
&& (\text{by Step~\ref{algo:fl-CUSUM:threshold} of \FLCuSum})\\
&\geq \sum_{t=0}^\infty \left((1-t\cdot\exp(-h))^+\right)^{|E|}
&& (\text{by~\cite[Lemma 3]{mei2005information}})\\
&= \sum_{t=0}^{\exp(h)} (1-t\cdot\exp(-h))^{|E|}\\
&\geq \int_{0}^{\exp(h)} (1-t\cdot\exp(-h))^{|E|} dt
&& (\text{by the left Riemann sum})\\
&= \frac{{\exp(h)}}{|E|+1}\\
&= \gamma,
\end{align*}
where the last step follows from setting the threshold $h = \log((|E|+1)\gamma)$.

\subsection{Detection delay upper bound (proof of Theorem~\ref{thm:SR1-WADD})}\label{sec:proof-WADD}

Here we upper bound the worst-case average detection delay of \FLCuSum. 
As we have seen in \S~\ref{sec:qcd:algo}, the worst-case average detection delay happens when the change-point $\nu = 1$.
Let the stopping time of the test statistic $\text{CUSUM}_e$ for link $e \in E$ be
\begin{align}
    \tau^{\text{CUSUM}_e} := \inf\{t\geq 1: \text{CUSUM}_e[t]\geq h\}.
\end{align}
Then, by~\eqref{eq:flcusum-stopping-time}, the stopping time of \FLCuSum\, is given by
\begin{align}\label{eq:stopping-relation}
    \tau = \min\limits_{e \in E} \left\{ \inf\{ t \geq 1 : \text{CUSUM}_e[t] \geq b \} \right\}= \min\limits_{e \in E} \tau^{\text{CUSUM}_e}.
\end{align}
Hence, 
\begin{align}
    \mathbb{E}_{1}^{(e^*)}[\tau] \leq \min_{e \in E}\mathbb{E}_{1}^{(e^*)}[\tau^{\text{CUSUM}_e}]\leq \mathbb{E}_1^{(e^*)}[\tau^{\text{CUSUM}_{e^*}}]. \label{eq:wadd-upper-middle}
\end{align}
Next, we upper bound the worst-case average detection delay of $\text{CUSUM}_{e^*}$. Specifically, for any $e \in E$, there is some $\alpha_e \geq 0$ to be specified later, such that we have
\begin{align}
\mathbb{E}_{1}^{(e^*)}\left[\nicefrac{\tau^{\text{CUSUM}_{e}}}{\alpha_e}\right] 
&= \sum_{i=0}^\infty \mathbb{P}_{1}^{(e^*)}\left[\nicefrac{\tau^{\text{CUSUM}_e}}{\alpha_e} > i\right] \notag\\
&\leq 1 + \sum_{i=1}^\infty \mathbb{P}_{1}^{(e^*)}\left[\nicefrac{\tau^{\text{CUSUM}_e}}{\alpha_e} > i\right] \notag\\
&= 1 + \sum_{i=1}^\infty \mathbb{P}_{1}^{(e^*)}\left[\forall 1\leq t \leq i\alpha_e: \text{CUSUM}_e[t] < h\right] \notag\\
&\leq 1 + \sum_{i=1}^\infty \mathbb{P}_{1}^{(e^*)}\left[\bigcap_{j=1}^i (\text{CUSUM}_e[j \alpha_e] < h)\right] \notag\\
&\overset{(a)}{\leq } 1 + \sum_{i=1}^\infty\prod_{j=1}^i \mathbb{P}_{1}^{(e^*)}\left[\max_{k:(j-1)\alpha_e + 1\leq k \leq j \alpha_e} \sum_{t=k}^{j\alpha_e} \log\left(\frac{{\bm f}_{1}^{(l)}({\bm X}_t)}{{\bm f}_0({\bm X}_t)}\right) < h \right], \label{eq:dd-length}
\end{align}
where ${\bm X}_t \sim {\bm f}_{1}^{(e^*)}$, and inequality (a) follows from the definition of $\text{CUSUM}_e[t]$, and the fact that the random variables are independent.
For $e = e^*$, let
\begin{align}
    \alpha_{e^*} = \frac{h}{D\left({\bm f}_{1}^{(e^*)}\big|\big|{\bm f}_0\right)}(1+\epsilon), \quad \epsilon > 0.
\end{align}
Then, it follows from Lemma~\ref{lemma:WLLN} that, as $h \rightarrow \infty$,
\begin{align}
\frac{\max\limits_{k:1\leq k \leq \alpha_{e^*}}\sum_{t = k}^{\alpha_{e^*}} \log\left(\frac{{\bm f}_{1}^{(e^*)}({\bm X}_t)}{{\bm f}_0({\bm X}_t)}\right)}{h}\overset{p}{\rightarrow}\beta,
\end{align}
where ${\bm X}_t \sim {\bm f}_{1}^{(e^*)}$, and $\beta >  1$. 
Therefore, as $h \rightarrow \infty$,
\begin{align}
\mathbb{P}_{1}^{(e^*)}\left[\max\limits_{k:1\leq k \leq \alpha_{e^*}}\sum_{t = k}^{\alpha_{e^*}} \log\left(\frac{{\bm f}_{1}^{(e^*)}({\bm X}_t)}{{\bm f}_0({\bm X}_t)}\right)< h \right]\rightarrow 0.
\end{align}
This implies that
\begin{align}
\mathbb{P}_{1}^{(e^*)}\left[\max\limits_{k:1\leq k \leq \alpha_{e^*}} \sum_{t = k}^{\alpha_{e^*}} \log\left( \frac{{\bm f}_{1}^{(e^*)}({\bm X}_t)}{{\bm f}_0({\bm X}_t)}\right) < h\right] \leq \delta, \label{eq:prob-nopassing}
\end{align}
where $\delta$ can be arbitrarily small for large $h$.
By~\eqref{eq:prob-nopassing} and~\eqref{eq:dd-length}, we have
\begin{align}                       
    \mathbb{E}_{1}^{(e^*)}\left[\nicefrac{\tau^{\text{CUSUM}_{e^*}}}{\alpha_{e^*}}\right] &\leq 1 + \sum_{i=1}^\infty \delta^{i} = \frac{1}{1-\delta}, 
\end{align}
which implies that, as $h \rightarrow \infty$,
\begin{align}\label{eq:attacked-DD}
\mathbb{E}_{1}^{(e^*)}\left[\tau^{\text{CUSUM}_{e^*}}\right] \leq \frac{\alpha_{e^*}}{1-\delta} \leq \frac{h}{D\left({\bm f}_{1}^{(e^*)}\big|\big|{\bm f}_0\right)}(1+o(1)) \overset{(b)}{=} \frac{h}{\sum_{p \in \mathcal{P}_{e^*}} D\left(f_{1,p}^{(e^*)}\big|\big|f_{0,p}\right)}(1+o(1)),
\end{align}
where equality (b) follows from the divergence decomposition lemma~\cite[Lemma 15.1]{lattimore2020bandit}.

Setting $h = \log((|E|+1)\gamma)$, by~\eqref{eq:wadd-upper-middle} and~\eqref{eq:attacked-DD}, we have, as $\gamma \rightarrow \infty$,
\begin{align*}
\mathbb{E}_{1}^{(e^*)}[\tau] \leq \min_{e \in E}\mathbb{E}_{1}^{(e^*)}[\tau^{\text{CUSUM}_e}]\leq \mathbb{E}_1^{(e^*)}[\tau^{\text{CUSUM}_{e^*}}]\leq \frac{\log((|E|+1)\gamma)}{\sum_{P \in P_{e^*}} D(P_{1,p}||P_{0,p})}(1+o(1)). 
\end{align*}
This completes the proof.

\end{document}